\newtheorem{definition}{Definition}[subsection]
\newtheorem{theorem}[definition]{Theorem}
\newtheorem{lemma}[definition]{Lemma}
\crefname{definition}{definition}{definitions}
\Crefname{definition}{Definition}{Definitions}
\crefname{theorem}{theorem}{theorems}
\Crefname{theorem}{Theorem}{Theorems}
\crefname{lemma}{lemma}{lemmas}
\Crefname{lemma}{Lemma}{Lemmas}
\def\@email#1#2{%
 \endgroup
 \patchcmd{\titleblock@produce}
  {\frontmatter@RRAPformat}
  {\frontmatter@RRAPformat{\produce@RRAP{*#1\href{mailto:#2}{#2}}}\frontmatter@RRAPformat}
  {}{}
}%
\begin{document}

\preprint{AIP/123-QED}

\title[Chaotic Kramers’ Law]{Chaotic Kramers’ Law: Hasselmann’s Program and AMOC Tipping}

\author{J. Deser}
\email{jakob.deser@tum.de}
\affiliation{Technical University of Munich, Department of Mathematics, 85748~Garching bei M\"unchen, Germany}
 
\author{R. Römer}
\email{r.k.roemer@exeter.ac.uk}
\affiliation{Department of Mathematics and Statistics, University of Exeter, Exeter, EX4 4QF, UK}

\author{N. Boers}
\affiliation{Earth System Modelling, School of Engineering and Design, Technical University of Munich, Munich 85521, Germany}
\affiliation{Potsdam Institute for Climate Impact Research, Potsdam 14473, Germany}
\affiliation{Department of Mathematics and Global Systems Institute, University of Exeter, Exeter EX4 4SB, UK}

\author{C.Kuehn}
\affiliation{Technical University of Munich, School of Computation, Information and Technology, Department of Mathematics, 85748~Garching bei M\"unchen, Germany}
\affiliation{Technical University of Munich, Munich Data Science Institute (MDSI), 85748 Garching bei M\"unchen, Germany}

\date{\today}

\begin{abstract}
In bistable dynamical systems driven by Wiener processes, the widely used Kramers' law relates the strength of the noise forcing to the average time it takes to see a noise-induced transition from one attractor to the other. We extend this law to bistable systems forced by fast chaotic dynamics, which we argue is in some cases a more realistic modeling approach than unbounded noise forcing. Transitions similar to the noise-driven case can only occur if the amplitude of the chaotic forcing is large enough. If this is the case, in our numerical example - a reduced-order model of the Atlantic Meridional Overturning Circulation (AMOC) - we observe the chaotic Kramers' law to hold even when the chaotic forcing is far from the stochastic limit. We discuss the limitations of the chaotic Kramers' law, how to address the numerical issues associated with the timescale separation, and give a possible explanation for the dynamics of recently found AMOC collapses and recoveries in complex climate models.
\end{abstract}

\maketitle

\begin{quotation}
A common approach when modeling dynamical systems is to separate a system into slow and fast components, and approximate the fast and often chaotic dynamics by a Wiener process that acts as forcing on the slow dynamics. One can ask what happens if the fast dynamics are not quite fast enough for this stochastic approximation to hold, but when, instead, chaotic forcing would be more realistic. For such cases, we derive the chaotic Kramers' law that relates the strength of the chaotic forcing to tipping times between two attractors of the forced slow dynamics. We explain key assumptions for the chaotic Kramers' law to hold, its surprisingly wide applicability, and its limitations while illustrating and interpreting our findings in a chaotically forced reduced-order AMOC model.
\end{quotation}

\section{Introduction}

The Hasselmann program from the 1970s \cite{Hasselmann_1976} is a widely used modeling approach in climate science, and similar techniques are commonly used in many other fields \cite{berglund2006noise,kuehn2015multiple}. The program proposed to model fast, often chaotic, fluctuations as noise, forcing the slower dynamics. This approach allows one to efficiently model and understand the slow dynamics and many large-scale processes, which would prove difficult if one tried to resolve all small-scale, fast processes in detail.

However, in many cases, it is possible - and if one is interested in long or many model runs, even necessary -  to describe natural processes with reduced-order models. In these simpler models, it can be feasible to also resolve the faster dynamics, which leads to the question of whether, in these models, a stochastic approximation of the fast dynamics is still a good and justified approach. Instead, one could, for example, use some bounded and fast chaotic forcing that acts on the slower dynamics. These two types of forcing can lead to even qualitatively different results: A multi-stable system driven by sufficiently generic unbounded (e.g., Gaussian) noise will almost surely undergo noise-induced tipping in finite time, i.e., if initialized in one attractor of the unforced system, it will transition to another attractor as a result of the noise realization upon waiting for long enough time. In contrast, a system forced by small-amplitude bounded chaos might only tip for certain values of the forced system's parameters, giving rise to a ``chaotic tipping window'' \cite{Ashwin_2024, Roemer2025}. 
Thus, although fast and small fluctuations may sometimes seem negligible, the choice of how to approximate them is essential. 
Despite the differences between stochastic and chaotic forcing, homogenization theory shows that, in the limit of infinite time-scale separation and given some assumptions, chaotic forcing converges to stochastic forcing \cite{Melbourne_2011, Holland_2007, Gottwald_2013}. Close to this limit, one can try to extend results from one case to the other; in this way, we derive the ``chaotic Kramers’ law'' which generalizes the well-known Kramers' law from the case of systems driven by stochastic forcing to systems driven by fast chaotic forcing of large-enough amplitude. 
We check this result in an example system and discuss how to numerically handle the large timescale separation in the chaotically forced system. This example suggests that the chaotic Kramers' law holds far beyond the stochastic limit, even when increments of the chaotically forced system are clearly not normally distributed.
The system that we use as an example is a chaotically forced reduced-order model\cite{Wood_2019} of one of the most important potential climate tipping elements, the Atlantic meridional overturning circulation (AMOC). 

A good understanding of the AMOC is a pressing problem as it strongly affects weather and climate on a global scale. A potential collapse of the AMOC would lead to substantial cooling in Europe and rainfall reductions in tropical African and Asian monsoon systems \cite{Ben-Yami2024}. Observational data suggest declining AMOC strength since the mid-twentieth century after a stable period \cite{Caesar_2021}, but the question of whether the AMOC will undergo an abrupt or large transition (i.e. tipping) due to climate change, and if so on which timescale, remains subject of ongoing research \cite{Boers2021,  Ben-Yami_2023, Ben-Yami2024a, Baker2025AMOC, Oh2025AMOCNoise}.
Studies with highly complex general circulation models (GCMs) can be used to assess the response of the AMOC to anthropogenic greenhouse gas forcing or to artificial hosing, where freshwater is added to the North Atlantic, mimicking meltwater influx from the Arctic \cite{Romanou_2023, Mccarthy_2023}. However, GCMs are computationally expensive and, due to their complexity, sometimes difficult to interpret, which calls for an identification of the most important underlying dynamics using reduced-order models. Due to their computational efficiency, these simpler models can also be used to run larger ensembles or to explore areas in phase space that would be too costly to explore with GCMs. We believe an improved understanding of chaotic forcing, as discussed in this paper, to be a valuable approach to creating useful reduced-order models capturing real-world mechanisms. Our findings indicate that seemingly stochastic transitions could also be observed in purely deterministic GCMs and thus support the use of both deterministic and stochastic AMOC models in future research.

The paper is organized as follows. We give an overview of the theory needed to derive and interpret the chaotic Kramers' law in Section~\ref{sec:Theory} by summarizing results from homogenization theory and discussing the known Kramers' law in the case of stochastically forced bistable systems. Then, we generalize this to the case of chaotically forced systems in Section~\ref{sec:chaotic_kramers}. We consider a reduced-order AMOC model\cite{Wood_2019, Alkhayuon_2019, Chapman_2024} in which we replace the stochastic with chaotic forcing in Section~\ref{sec:3-boxModelAndLorenz} and discuss how the fast chaotic forcing can be connected to stochastic forcing in a numerically stable way. Then, we use this model to test the chaotic Kramers' law numerically in Section~\ref{sec:Kramers'plots} and interpret the results in the context of other studies on the AMOC. We end with a general discussion in Section~\ref{sec:discussion}.

\section{Chaotic limits and asymptotic transition times}
\label{sec:Theory}

\subsection{Homogenization}
In this section, we present one specific result from the theory of \textit{averaging} and \textit{homogenization} of ODEs. The general idea of homogenization is to average out fast time-scales in multiscale systems to get a simpler description of the influence on the slow variables. These techniques can be applied not only for ODEs but also for \textit{partial differential equations} (PDEs) or Markov chains (see \cite{Pavliotis_2008} for an introduction).

We consider a skew-product fast-slow ODE system, where the slow subsystem represents some drift dynamics we want to observe, and the fast subsystem involves chaotic dynamics on a compact attractor. The chaotic subsystem ($y$-dynamics) acts as an additive forcing on the drift ($x$-dynamics), resulting in constant additive noise in the limit of infinitely large timescale separation between the forcing and the forced system. This is a special case of Result 11.1 in \cite{Pavliotis_2008}, which allows the variables $x^{(\epsilon)}$ and $y^{(\epsilon)}$ to be stochastic.

Concretely, we consider a family of fast-slow ODE systems of the form
\begin{subequations} \label{eq:skew_system}
\begin{align}
\dot{x}^{(\epsilon)} &= \epsilon^{-1}f_0(y^{(\epsilon)})+f(x^{(\epsilon)}), &x^{(\epsilon)}(0) = x_0 \in \mathbb{R}^n \label{eq:slow_sub}\\ 
\dot{y}^{(\epsilon)} &= \epsilon^{-2}g(y^{(\epsilon)}), &y^{(\epsilon)}(0) = y_0 \in \mathbb{R}^m \label{eq:fast_sub}
\end{align}
\end{subequations}
where the dynamics of $y$ are supported on a compact attractor $\Lambda \subseteq \mathbb{R}^m$ with ergodic invariant measure $\mu$.

\begin{lemma}
\label{lem:melbourne}
Consider a system of the form \eqref{eq:skew_system}. Let $W_t$ be an $n$-dimensional Wiener Process and $\Sigma$ be an $n \times n$ covariance matrix.\\
We assume that $f_0, f$ and $g$ are locally Lipschitz, $\int_\Lambda f_0(y) d\mu(y) = 0$, $x^Tf(x) \leq M(1 + \lVert x\rVert)$ for some $M > 0$ and that $(y, f_0, \Sigma)$ satisfies the weak invariance principle \ref{def:WIP}. 

Then $x^{(\epsilon)} \rightarrow_w X$ in $C([0, T]), \mathbb{R}^n)$ for $\epsilon \rightarrow 0$ for any $T > 0$, where $X$ is the unique solution to the SDE 
\begin{align}
dX = f(X) dt + \sqrt{\Sigma}dW_t
\end{align}
\end{lemma}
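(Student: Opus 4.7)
The plan is to reduce the lemma to the weak invariance principle (WIP) combined with a continuous-mapping argument for the ODE driven by a continuous forcing path. First, I would integrate the slow equation~\eqref{eq:slow_sub} to obtain
\begin{equation}
x^{(\epsilon)}(t) = x_0 + \int_0^t f(x^{(\epsilon)}(s))\,ds + N^{(\epsilon)}(t),
\end{equation}
where $N^{(\epsilon)}(t) := \epsilon^{-1}\int_0^t f_0(y^{(\epsilon)}(s))\,ds$. Since $\dot{y}^{(\epsilon)} = \epsilon^{-2} g(y^{(\epsilon)})$, one has $y^{(\epsilon)}(s) = y(s/\epsilon^2)$ with $\dot{y} = g(y)$, and the change of variable $u = s/\epsilon^2$ converts the forcing into the diffusively rescaled Birkhoff integral $N^{(\epsilon)}(t) = \epsilon \int_0^{t/\epsilon^2} f_0(y(u))\,du$. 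Because $\int_\Lambda f_0\,d\mu = 0$ and $(y, f_0, \Sigma)$ satisfies the WIP by hypothesis, this yields $N^{(\epsilon)} \Rightarrow \sqrt{\Sigma}\, W$ in $C([0,T], \mathbb{R}^n)$.

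Next, I would introduce the ODE solution map $\Phi\colon C([0,T], \mathbb{R}^n) \to C([0,T], \mathbb{R}^n)$ sending a forcing path $N$ to the unique $X$ satisfying $X(t) = x_0 + \int_0^t f(X(s))\,ds + N(t)$. The growth assumption $x^\top f(x) \le M(1 + \|x\|)$ applied to $\tfrac12 \tfrac{d}{dt}\|X - N\|^2$, combined with Gronwall's lemma, yields an a priori bound for $\|X\|_\infty$ on $[0,T]$ depending only on $\|N\|_\infty$, $M$ and $T$. This rules out finite-time blow-up and confines trajectories to a compact set on which $f$ is Lipschitz. A second Gronwall estimate then gives $\|\Phi(N_1) - \Phi(N_2)\|_\infty \le C_{T,R} \|N_1 - N_2\|_\infty$ uniformly over $\|N_i\|_\infty \le R$, so $\Phi$ is continuous on sup-norm balls and hence continuous on $C([0,T], \mathbb{R}^n)$.

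Finally, the continuous mapping theorem transfers the weak convergence $N^{(\epsilon)} \Rightarrow \sqrt{\Sigma}\, W$ to $x^{(\epsilon)} = \Phi(N^{(\epsilon)}) \Rightarrow \Phi(\sqrt{\Sigma} W)$. For the constant diffusion coefficient $\sqrt{\Sigma}$, the Itô stochastic integral against $W$ coincides almost surely with the Lebesgue-Stieltjes integral $\sqrt{\Sigma}\, W_t$, so the integral equation defining $\Phi(\sqrt{\Sigma}W)$ is exactly the integrated form of $dX = f(X)\,dt + \sqrt{\Sigma}\,dW_t$; uniqueness of this SDE under local Lipschitz continuity plus linear growth identifies the weak limit as the claimed $X$.

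The main obstacle I anticipate is the global well-posedness and continuity of $\Phi$ on all of $C([0,T], \mathbb{R}^n)$: local Lipschitzness alone does not provide the required continuity estimate, so one must carefully exploit the dissipation-type growth condition on $f$ to obtain the uniform a priori bound \emph{before} invoking the Lipschitz constant of $f$ on the resulting compact set. Once this functional-analytic step is in place, the result is essentially a pathwise rewrite of the WIP, which is precisely why the lemma specializes Result~11.1 of \cite{Pavliotis_2008} to the additive (that is, $y$-independent) drift case.
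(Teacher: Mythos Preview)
The paper does not give its own proof of this lemma; it simply remarks that ``this result can be proven using the approach by \cite{Melbourne_2011}, from which we adopted the result and adjusted it slightly.'' Your proposal \emph{is} that approach: in the additive case one writes the slow variable as $x^{(\epsilon)}=\Phi(N^{(\epsilon)})$ where $N^{(\epsilon)}$ is the diffusively rescaled Birkhoff integral, invokes the WIP to get $N^{(\epsilon)}\Rightarrow\sqrt{\Sigma}W$, and then pushes this through the continuous solution map $\Phi$ via the continuous mapping theorem. So you are aligned with the paper's deferred argument.

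One technical caveat worth sharpening: the one-sided condition $x^\top f(x)\le M(1+\|x\|)$ controls $X^\top f(X)$, but the quantity you differentiate is $\tfrac12\|X-N\|^2$, whose derivative is $(X-N)^\top f(X)=X^\top f(X)-N^\top f(X)$. The cross term $N^\top f(X)$ involves $|f(X)|$, which is not a priori controlled for merely locally Lipschitz $f$. In \cite{Melbourne_2011} this step is handled under a global Lipschitz hypothesis on $f$; under the present weaker assumptions one typically closes the estimate via a truncation/continuation argument (bound the solution for the truncated drift, observe the bound is truncation-independent, then remove the truncation). You already flag this as the main obstacle, so you are looking in the right place---just be aware that the literal Gronwall step as written needs this repair.
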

This result can be proven using the approach by \cite{Melbourne_2011}, from which we adopted the result and adjusted it slightly.
Note that the diffusion matrix $\sqrt{\Sigma}$ can be computed from time correlations of the fast dynamics via the Green-Kubo formula or directly estimated from the limiting process, as discussed in Section \ref{sec:Green-Kubo_vs_WIPestimation}.

\begin{definition}[Weak invariance principle]\cite{Melbourne_2011} \label{def:WIP}
We say the triple $(y, f_0, \sqrt{\Sigma})$ with the properties given in Lemma \ref{lem:melbourne} satisfies the weak invariance principle (WIP) if for all $T > 0$:
\begin{align}
\label{eq:WIP}
k^{-\frac{1}{2}} \int_0^{kt}f_0(y^{(1)}(\tau)) d\tau \rightarrow_w \sqrt{\Sigma} W_t
\end{align}
for $k \rightarrow \infty$ in $C([0, T], \mathbb{R}^n)$.
\end{definition}
By $\rightarrow_w$, we denote weak convergence of measures (distributions), where we interpret $y^{(1)}$ as a random variable by drawing the initial condition according to the invariant probability measure $\mu$.
One can think of this weak invariance principle as a kind of a functional Central Limit Theorem for $f_0(y)$, in which the slow variable converges in distribution as a 
stochastic process to the solution of an effective SDE. Proving the validity of the WIP for a specific system is highly non-trivial and often requires powerful concepts from dynamical systems theory.
Despite that, according to \cite{Melbourne_2011}, this assumption is true for a large class of flows, including Lorenz attractors. In this specific case, an even stronger version of \ref{def:WIP} is true, as $\mu$-almost sure convergence for geometric Lorenz flows can be proven\cite{Holland_2007}.

\subsection{Kramers' law and large deviation theory}

We briefly summarize different formulations of the classical Kramers' law for stochastically forced systems and state an underlying result from Freidlin and Wentzell's theory of large deviations. In the next section, we then generalize it to the case of time-scale separated chaotic forcing, building on a sketch of a derivation of the classical Kramers' law, given in Appendix \ref{app:LargeDevToKramers}.

Kramers' law gives an asymptotical result on expected transition times in multi-stable potential systems with white noise and was originally formulated in the modeling of chemical reactions, see \cite{Arrhenius_1889, Kramers_1940}. A broad overview of Kramers' law and its derivation from large deviations theory is given in \cite{Berglund_2013}. We summarize this approach in \Cref{sec:ldp_to_kramers}. 
Consider a stochastic differential equation of the form
\begin{align}
\label{eq:Kramer_SDE}
dX = -\nabla V(X)dt+\sqrt{2\delta}dW_t\text{,}
\end{align}
where $V: \mathbb{R}^n \rightarrow \mathbb{R}$ is a smooth potential with two wells at the local minima $\bar x_1$ and $\bar x_2$, and $W_t$ is an $n$-dimensional Wiener Process. We assume enough dissipativity such that equation \eqref{eq:Kramer_SDE} has a unique global-in-time strong solution $(X_t)_{t \geq 0}$.
Let $\tau_{1, 2}$ be the first hitting time of $X$ of a ball of radius $\delta$ around $\bar x_2$, when starting the process in $\bar x_1$.
We define the \textit{ communication height} as the minimal height of the potential crossing from $\bar x_1$ to $\bar x_2$ over a continuous path $\gamma$:
\begin{align}
H(\bar x_1, \bar x_2) = \inf_{\gamma: \bar x_1 \rightarrow \bar x_2}\left(\sup_{z \in \gamma} V(z)\right).
\end{align}
This minimal height of the potential $V(z)$ is a measure of the difficulty of crossing from $\bar x_1$ to a ball of radius $\delta$ around $\bar x_2$.
For a generic $V$, there is a unique point $x_s$ that attains the communication height, i.e. $H(\bar x_1, \bar x_2) = V(x_s)$. This $x_s$ is called the relevant saddle and is a critical point of $V$ where $\nabla^2 V(x_s)$ has $n-1$ positive and one negative eigenvalue.

\begin{theorem}[Kramers' law] \cite{Berglund_2013}
\label{thm:KramersLaw}
For small $\delta$ one has
\begin{align}
\mathbb{E}_{\bar x_1}[\tau_{1, 2}] = C e^{\frac{1}{\delta}[V(x_s) - V(\bar x_1)]} \left[ 1+\mathcal{O}(\delta^{\frac{1}{2}} \lvert \log{\delta} \rvert^{\frac{3}{2}}) \right]\text{,}
\end{align}
where $C$ is a constant depending on $V$.
\end{theorem}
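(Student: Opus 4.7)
The plan is to derive \Cref{thm:KramersLaw} from the Freidlin--Wentzell sample-path large deviation principle (LDP) for \eqref{eq:Kramer_SDE}, as outlined in \cite{Berglund_2013}, and then upgrade the resulting logarithmic asymptotics to the sharp Eyring--Kramers prefactor together with the stated multiplicative error. Setting $\epsilon = 2\delta$, the family $(X^\epsilon)$ satisfies an LDP on $C([0,T],\mathbb{R}^n)$ with good rate function
\begin{align*}
I_{[0,T]}(\gamma) \;=\; \tfrac{1}{2}\int_0^T \lVert \dot\gamma(s) + \nabla V(\gamma(s))\rVert^2\, ds
\end{align*}
for absolutely continuous $\gamma$ starting at $\bar x_1$, and $+\infty$ otherwise. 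The strategy is standard: identify the quasi-potential, use it to control the exponent, and then replace the LDP by a finer estimate to extract the prefactor.

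The first concrete step is to compute the quasi-potential $\bar V(\bar x_1, y) = \inf_{T>0}\inf_{\gamma: \bar x_1 \to y} I_{[0,T]}(\gamma)$. Expanding the square and using $\tfrac{d}{ds} V(\gamma(s)) = \nabla V(\gamma)\cdot\dot\gamma$ yields the identity
\begin{align*}
\tfrac{1}{2}\lVert \dot\gamma + \nabla V\rVert^2 \;=\; \tfrac{1}{2}\lVert \dot\gamma - \nabla V\rVert^2 + 2\,\nabla V\cdot\dot\gamma,
\end{align*}
so every path from $\bar x_1$ to $y$ satisfies $I_{[0,T]}(\gamma) \geq 2[V(y) - V(\bar x_1)]$, with equality achieved (in the limit $T\to\infty$) along any reparametrization of the time-reversed gradient flow $\dot\gamma = +\nabla V(\gamma)$ that climbs monotonically from $\bar x_1$ to $y$. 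Minimizing over the separating surface between the two wells places the optimum at the relevant saddle, so $\bar V(\bar x_1, x_s) = 2[V(x_s)-V(\bar x_1)]$. Plugging this into the Freidlin--Wentzell exit-time asymptotics then produces the Arrhenius exponent
\begin{align*}
\lim_{\delta\to 0}\, \delta\, \log \mathbb{E}_{\bar x_1}[\tau_{1,2}] \;=\; V(x_s) - V(\bar x_1),
\end{align*}
which already matches the exponential factor of \Cref{thm:KramersLaw}.

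The main obstacle is that the pure LDP is only logarithmically sharp and cannot see the prefactor $C$ or the $\mathcal{O}(\delta^{1/2}|\log\delta|^{3/2})$ correction. To close this gap I would follow the potential-theoretic approach of Bovier--Eckhoff--Gayrard--Klein summarized in \cite{Berglund_2013}: express $\mathbb{E}_{\bar x_1}[\tau_{1,2}]$ as the ratio of the equilibrium potential and the capacity between small balls around $\bar x_1$ and $\bar x_2$, and estimate both via the Dirichlet and Thomson variational principles with well-chosen test functions and test flows. Quadratic Taylor expansions of $V$ near the well $\bar x_1$ and near the saddle $x_s$ reduce each extremal problem to a Gaussian integral; the well contribution gives $(2\pi\delta)^{n/2}/\sqrt{\det\nabla^2 V(\bar x_1)}$, while the saddle contribution, after integrating out the single unstable direction with eigenvalue $\lambda_-(x_s) < 0$, gives the complementary Gaussian factor, so that
\begin{align*}
C \;=\; \frac{2\pi}{|\lambda_-(x_s)|}\sqrt{\frac{|\det \nabla^2 V(x_s)|}{\det \nabla^2 V(\bar x_1)}}.
\end{align*}
The residual $\mathcal{O}(\delta^{1/2}|\log\delta|^{3/2})$ error arises from choosing the cut-off radii around $\bar x_1$ and $x_s$ proportional to $\sqrt{\delta|\log\delta|}$ and controlling the contributions from outside these neighbourhoods; this bookkeeping, rather than the LDP itself, is the technical heart of the argument and the reason \Cref{thm:KramersLaw} is strictly stronger than its Arrhenius-type logarithmic counterpart.
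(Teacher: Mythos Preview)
Your outline is essentially correct as a sketch of how one proves the full Eyring--Kramers formula: the quasi-potential computation via the $\lVert\dot\gamma+\nabla V\rVert^2 = \lVert\dot\gamma-\nabla V\rVert^2 + 2\nabla V\cdot\dot\gamma$ trick is the standard reduction in the gradient case, the Arrhenius exponent follows, and the BEGK potential-theoretic machinery with $\sqrt{\delta|\log\delta|}$ cut-offs is indeed what produces both the prefactor $C$ and the stated $\mathcal{O}(\delta^{1/2}|\log\delta|^{3/2})$ correction.

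However, you should be aware that the paper does \emph{not} prove \Cref{thm:KramersLaw}. It is stated as a classical result with a citation to \cite{Berglund_2013}, and the explicit form of $C$ is quoted immediately afterwards (attributed to \cite{Bouchet_2016}) without derivation. The only argument the paper actually carries out is in \Cref{sec:ldp_to_kramers}, where it sketches the LDP route to the \emph{logarithmic} asymptotics $\lim_{\delta\to 0}2\delta\log\mathbb{E}_{\bar x_1}[\tau_D] = \bar V(\bar x_1,x_s)$ via the geometric-trials heuristic, and it explicitly remarks that ``the prefactor is not derived. In this case Kramers' law is also known as Arrhenius' Law.'' So your proposal goes strictly beyond what the paper attempts: the paper's own contribution stops at the Arrhenius level, which it needs only as scaffolding for the chaotic analogue in \Cref{lem:chaotic_kramers}, while you additionally outline the sharp-prefactor upgrade. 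Both the LDP portion of your argument and the paper's sketch are the same standard Freidlin--Wentzell reasoning; the potential-theoretic refinement you describe is simply absent from the paper.
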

The subscript in $\mathbb{E}_{\bar x_1}$ indicates that the process starts in $\bar x_1$.
Using the stationarity of the \textit{Gibbs-measure} $\rho(x) = \frac{1}{Z_{(\delta)}}e^{-\frac{V(x)}{\delta}}$ for \eqref{eq:Kramer_SDE} if $Z_{(\delta)} = \int_{\mathbb{R}^n} e^{-\frac{V(x)}{\delta}} dx < \infty$, one can determine $C$ explicitly\cite{Bouchet_2016} to be $\frac{2\pi}{\lvert \lambda_1 \rvert} \sqrt{\frac{\lvert \det{\nabla^2V(x_s)}\rvert}{\det{\nabla^2 V(\bar x_1)}}}$, where $\lambda_1$ is the unique negative eigenvalue of $\nabla^2 V(x_s)$. 
We now turn to the more general case of irreversible and anisotropic systems, i.e., systems for which the drift cannot be expressed as a gradient. These are of the general form $dX = F(X)dt + \sqrt{2\delta} A dW$.
We introduce a {\em quasipotential}, which is a useful generalization of the potential $V$ for irreversible systems. To prepare its Definition \ref{def:QP}, we first introduce a few objects.

Consider a Polish space $E$ with Borel $\sigma$-algebra $\mathcal{E}$. We define a {\em rate function} $I : E \rightarrow [0, \infty]$ as a lower semicontinuous function, i.e. $N_\alpha = \{x\in E | I(x) \leq \alpha\}$ is closed for all $\alpha \in \mathbb{R}\cup\{\infty\}$. If, in addition, the sublevel sets $N_\alpha$ are compact, $I$ is called a good rate function \cite{dembo_large_2010}. 
In the following, we consider the {\em Freidlin-Wentzell action} \cite{Freidlin_wentzell_2012} $I:C([0, T], \mathbb{R}^n) \rightarrow \mathbb{R} \cup \{ \infty\}$ as the good rate function. 
It characterizes the ``difficulty'' of a given path $\varphi$. Typically, one considers a path that is initialized in an attractor $x_0$.
$I$ is defined as:
\begin{align}
I(\varphi) = 
\begin{cases}\label{eq:FW-action}
	\frac{1}{2}\int_0^T \lVert \dot{\varphi}(t)-F(\varphi(t))\rVert_A^2 dt \text{ if } \varphi \in \mathcal H_1 \\
    \infty  \text{ else} 
\end{cases}
\end{align}
where $\lVert x \rVert_A^2 := x^\mathsf{T}(AA^\mathsf{T})^{-1}x$ and $\mathcal H_1$ denotes the Cameron-Martin space 
\begin{align*}
\mathcal{H}_1 = \{\varphi :[0, T] \rightarrow \mathbb{R}^n \, \text{absolutely continuous} \\ \text{s.t. } \varphi(0) = x_0, \dot \varphi \in L^2\}
\end{align*}
with norm $||\varphi||_{\mathcal{H}_1}=\left(\int_0^T||\dot\varphi (s)||^2ds\right)^{\frac{1}{2}}$.
Note that it contains the same functions as the Sobolov space $H^1([0,T], \mathbb{R}^n)$ with the additional boundary condition $h(0)=x_0$.

Now, we can define the quasipotential via a double-minimization of the FW action over both the possible paths from the reference attractor towards the point of interest and over the travel time along these paths.

\begin{definition}[Quasipotential] \label{def:QP}\cite{Berglund_2013}
Assume $D \subset \mathbb{R}^n$ s.t. $D$ is contained in the basin of attraction of the asymptotically stable equilibrium $\bar x_1$ of the unperturbed system $\dot x = F(X)$ and let $I$ be given by Equation \eqref{eq:FW-action}. For $z\in D$ we define
\begin{align*}
\bar V(\bar x_1, z)= \inf_{T > 0} \inf_{\varphi} \{I(\varphi):\varphi \in (C[0, T], \mathbb{R}^n): \varphi(0) = \bar x_1, \\\varphi(T) = z\} \text{.}
\end{align*}
$\bar V$ is called a quasipotential.
\end{definition}
In the case of a gradient system, the quasipotential is just an affine transformation of the potential of the system $V(x)$.

\begin{theorem}[The irreversible Kramers formula]\cite{Bouchet_2016} \label{thm:irrev_Kramers}
Consider a diffusion process of the form $dX = F(X)dt + \sqrt{2\delta} A dW$ with $F$ Lipschitz and bounded, $A \in \mathbb{R}^{n \times n}$ non-degenerate, where the unperturbed system $\dot x = F(x)$ has two asymptotically stable equilibria $\bar x_1$ and $\bar x_2$ and a relevant saddle $x_s$. Under a few more assumptions (see \cite{Bouchet_2016}, subsection 5.1.3.) we find that
\begin{equation}
\label{eq:irrev_kramers}
\lim_{\delta \rightarrow 0}2\delta \log \mathbb{E}_{\bar x_1}[\tau_{1, 2}] =\bar V(\bar x_1, x_s)\text{.}
\end{equation}

\end{theorem}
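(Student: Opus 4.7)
The plan is to reduce the statement to a Freidlin-Wentzell large deviations principle (LDP) for the diffusion combined with a standard exit-time estimate. Under the stated assumptions (Lipschitz bounded $F$, non-degenerate $A$), the family $(X^\delta)_{\delta>0}$ satisfies a sample-path LDP on $C([0,T],\mathbb{R}^n)$ with speed $(2\delta)^{-1}$ and Freidlin-Wentzell rate function $I_T(\varphi)=\tfrac{1}{2}\int_0^T \langle \dot\varphi(t)-F(\varphi(t)),\,(AA^T)^{-1}(\dot\varphi(t)-F(\varphi(t)))\rangle\,dt$ on absolutely continuous paths, and $+\infty$ otherwise. By the very definition of the quasipotential one then has $\inf\{I_T(\varphi):T>0,\,\varphi(0)=\bar x_1,\,\varphi(T)=z\}=\bar V(\bar x_1,z)$.

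Next, I would translate this path-level statement into an exit-time estimate. Fix a bounded open domain $D$ contained in the basin of attraction of $\bar x_1$ whose smooth boundary is such that $z\mapsto \bar V(\bar x_1,z)$ attains its minimum uniquely at the relevant saddle $x_s$. The standard Freidlin-Wentzell exit-time theorem (see, e.g., \cite{Berglund_2013} and references therein) then gives
\[
\lim_{\delta\to 0}\, 2\delta\,\log \mathbb{E}_{\bar x_1}[\tau_D]=\bar V(\bar x_1,x_s),
\]
which is precisely the logarithmic equivalence claimed in \eqref{eq:irrev_kramers}, but for $\tau_D$ in place of $\tau_{1,2}$.

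Finally, I would pass from ``exit from $D$'' to ``hit a small ball around $\bar x_2$''. After a successful crossing of $\partial D$ near $x_s$, the deterministic drift $F$ carries the trajectory into an arbitrarily small neighborhood of $\bar x_2$ in order-one time with probability tending to one as $\delta\to 0$; failed excursions (those returning to $D$ before reaching $\bar x_2$) come back near $\bar x_1$ in order-one time by the same argument and are chained together using the strong Markov property. Both corrections are negligible on the exponential scale $e^{\bar V/(2\delta)}$, so $\mathbb{E}_{\bar x_1}[\tau_{1,2}]$ and $\mathbb{E}_{\bar x_1}[\tau_D]$ agree to logarithmic order.

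\textbf{Main obstacle.} The delicate geometric step is verifying that the relevant saddle $x_s$ is really the unique minimizer of $\bar V(\bar x_1,\cdot)$ on the separatrix and that the optimal instanton crosses it transversally; outside the gradient (reversible) case this is nontrivial and is precisely what the additional hypotheses of Subsection 5.1.3 of \cite{Bouchet_2016} encode. Obtaining the explicit prefactor $\tilde C$---not needed for the logarithmic equivalence but quoted in \cite{Bouchet_2016}---requires going beyond the LDP, via either the potential-theoretic capacity estimates of Bovier-Eckhoff-Gayrard-Klein or the variational instanton analysis of Bouchet-Reygner around $x_s$, both of which exploit fine spectral properties of the infinitesimal generator.
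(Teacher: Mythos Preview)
Your proposal is correct and follows essentially the same route as the paper's own sketch in Section~\ref{sec:ldp_to_kramers}: the Freidlin--Wentzell LDP with rate function \eqref{eq:FW-action}, the exit-time asymptotics \eqref{eq:exit_time_D} for a domain $D$ whose boundary quasipotential is minimised at $x_s$, and the negligible (on the exponential scale) passage from $\tau_D$ to $\tau_{1,2}$ via the strong Markov property. The only cosmetic difference is that you invoke the exit-time theorem as a standard black box, whereas the paper unpacks it one level further via the geometric-trials argument (restarting at multiples of $T_1$) for the upper bound and cites Dembo--Zeitouni for the lower bound; your identification of the ``main obstacle'' also matches the paper's explicit hypothesis that the anisotropic exit concentrates at $x_s$.
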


Equation \eqref{eq:irrev_kramers} is sometimes written as
\begin{align}
\label{eq:irrev_kramersLogEq}
\mathbb{E}_{\bar x_1}[\tau_{1, 2}] \simeq e^{\frac{\bar V(\bar x_1, x_s)}{2\delta}}\text{,}
\end{align}
where the symbol ``$\simeq$'' means {\em logarithmic equivalence} as defined in \cite{heymann_geometric_2008}: two functions $f(x)$ and $g(x)$ are logarithmically equivalent if and only if $\lim_{\delta \rightarrow 0}\frac{\log(f(\delta))}{\log(g(\delta))} = 1$.

In \cite{bouchet_generalisation_2016}, a prefactor $\tilde C$ for Equation \eqref{eq:irrev_kramersLogEq} is derived explicitly, s.t. they write:
\begin{align}
\label{eq:irrev_kramersLogEqCtilde}
\mathbb{E}_{\bar x_1}[\tau_{1, 2}] = \lim_{\delta\rightarrow0^+}\tilde C e^{\frac{\bar V(\bar x_1, x_s)}{2\delta}}\text{.}
\end{align}

Equation \eqref{eq:irrev_kramers} can be established using  Freidlin-Wentzell theory; some details are given in the Appendix \ref{app:LargeDevToKramers}.

\subsubsection{Generalization: chaotic Kramers' law}
\label{sec:chaotic_kramers}
We now transfer Kramers' law to the chaotic case. We first state the result in Lemma \ref{lem:chaotic_kramers} and then derive it by first deriving Theorem \ref{thm:chaotic_ldp}, a large deviation principle (LDP) in the chaotic case for a specific set $\Gamma$, and then use this LDP to deduce the chaotic Kramers' law \ref{lem:chaotic_kramers}. This is analogous to the stochastic case, which is summarized in Appendix \ref{app:LargeDevToKramers}.

Let $\Gamma \subset \mathcal{H}_1$ be a regular set of paths with respect to $I$, i.e., the minimization of $I$ over the interior $\Gamma^\circ$ equals the minimization over the closure $\bar \Gamma$, so $\inf_{\varphi \in \Gamma^\circ}I(\varphi) = \inf_{x \in \bar \Gamma}I(x)$.

Denote by $x^{(\epsilon)}_{(\delta)}$ the solution of the ODE \eqref{eq:slow_sub} with $f_0 = \sqrt{2 \delta}\pi_1$, and by $X_{(\delta)}$ the solution of the limit SDE. We assume the unforced dynamics of $x$ to have stable equilibria at $\bar x_1$ and $\bar x_2$, and we would like to investigate tipping from $\bar x_1$ to $\bar x_2$ in response to both stochastic and chaotic forcing. Assume all solutions are defined up to some finite time $T \geq 0$.

\begin{lemma}[chaotic Kramers' law] \label{lem:chaotic_kramers}
    Let $\tau_{1, 2}^{(\epsilon)}$ be the first hitting time of the stochastic process $x^{(\epsilon)}_{(\delta)}$, initialized in $\bar{x}_1$ of a small neighborhood of the equilibrium $\bar x_2$. 
    For $\epsilon \rightarrow 0$, $\epsilon \ll \delta$,
    \begin{align}
        \liminf_{\delta \rightarrow 0} 2\delta \log \mathbb{E}_{\bar{x}_1}[\tau_{1, 2}^{(\epsilon)}] = \bar V(\bar x_1, x_s) \text{.}
    \end{align}
\end{lemma}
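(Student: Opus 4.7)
The plan is to mirror the derivation in Section~\ref{sec:ldp_to_kramers} while replacing the stochastic Freidlin-Wentzell LDP with the chaotic large deviation principle (Lemma~\ref{thm:chaotic_ldp}). The guiding idea is that in the regime $\epsilon \ll \delta$, homogenization (Lemma~\ref{lem:melbourne}) forces the time integral of $f_0(y^{(\epsilon)})$ to behave like a Wiener integral on timescales much shorter than the expected escape time $e^{\bar V/(2\delta)}$, so the same Markov-renewal bookkeeping that yields the irreversible Kramers formula should apply. As a preparatory step, I would fix a bounded domain $D$ in the basin of $\bar x_1$ whose boundary uniquely minimises $\bar V(\bar x_1, \cdot)$ at the saddle $x_s$ and reduce $\mathbb E[\tau_{1,2}^{(\epsilon)}]$ to the expected exit time $\mathbb E[\tau_D^{(\epsilon)}]$; the deterministic transit times from $x_0$ into a small ball around $\bar x_1$ and from $x_s$ into a neighbourhood of $\bar x_2$ are uniformly bounded and hence negligible on the logarithmic scale.

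For the lower bound $\liminf_{\delta \to 0}\, 2\delta \log \mathbb E[\tau_{1,2}^{(\epsilon)}] \geq \bar V(\bar x_1, x_s)$, I would apply the chaotic LDP to the closed set of paths exiting $D$ within a reference window $[0,T_1]$. Since every such path carries action at least $\bar V(\bar x_1, x_s)$, one obtains, for any $\eta > 0$ and all sufficiently small $\epsilon$, $\mathbb P_{\bar x_1}[x^{(\epsilon)}_{(\delta)} \text{ exits } D \text{ by time } T_1] \leq e^{-(\bar V(\bar x_1, x_s) - \eta)/(2\delta)}$. A renewal argument in the spirit of Theorem~5.7.11(a) in \cite{Dembo_2009} then yields $\mathbb E[\tau_D^{(\epsilon)}] \geq c\, e^{(\bar V(\bar x_1, x_s) - \eta)/(2\delta)}$, from which the lower bound follows after sending $\eta \to 0$.

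For the matching upper bound, the chaotic LDP provides a lower bound on the probability that $x^{(\epsilon)}_{(\delta)}$ tracks a tube $\Gamma_\eta$ of paths close to the instanton $\psi$ from $\bar x_1$ to $x_s$. Exploiting the skew-product structure~\eqref{eq:skew_system} and the $y$-invariance of the measure $\mu$, I would iterate these "escape attempts" over disjoint windows of length $T_1$, stochastically dominating $\tau_D^{(\epsilon)}$ by $T_1$ times a geometric variable with success probability bounded below by $e^{-(\bar V(\bar x_1, x_s) + \eta)/(2\delta)}$. Taking expectation and letting $\eta \to 0$ gives $\liminf_{\delta \to 0}\, 2\delta \log \mathbb E[\tau_{1,2}^{(\epsilon)}] \leq \bar V(\bar x_1, x_s)$, completing the argument.

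The main obstacle is that $x^{(\epsilon)}_{(\delta)}$ alone is not Markov; the renewal argument must run on the joint flow $(x^{(\epsilon)}, y^{(\epsilon)})$, and each restart requires $y^{(\epsilon)}$ to be effectively redistributed according to $\mu$. Controlling this mixing of the fast subsystem between attempts, uniformly in the coupled limits $\epsilon \to 0$ and $\delta \to 0$ with $\epsilon \ll \delta$ so that approximation errors from both the chaotic LDP and the geometric tail vanish on the logarithmic scale, is the delicate part, and is presumably what dictates the $\liminf$ formulation rather than a full limit.
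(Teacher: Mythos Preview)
Your proposal follows essentially the same route as the paper: obtain escape probabilities on a finite window $[0,T_1]$ from the chaotic LDP (Theorem~\ref{thm:chaotic_ldp}), then convert these into two-sided bounds on $\mathbb E[\tau_D^{(\epsilon)}]$ via the geometric renewal argument used for~\eqref{eq:exit_time_D}. On the Markov obstruction you correctly flag, the paper does not resolve it either---it merely \emph{conjectures} an approximate Markov property for the slow variable after sufficient mixing of the fast subsystem---so your suggestion to run the renewal on the joint flow $(x^{(\epsilon)}, y^{(\epsilon)})$ and control redistribution of $y^{(\epsilon)}$ toward $\mu$ is a slightly more concrete formulation of the same heuristic, and in both treatments this step remains non-rigorous.
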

Note that $x^{(\epsilon)}_{(\delta)}$ is a stochastic process in the sense described after Definition \ref{def:WIP}. In the following, we derive Lemma \ref{lem:chaotic_kramers}.

({\em Proof outline}) We consider paths from $\bar x_1$ to the critical saddle $x_s$:
\begin{equation}
\begin{aligned}
\Gamma = \{x \in C([0, T], \mathbb{R}^n) \text{ s.t. } \exists S > 0 \text{ with }\\ x(0) = \bar x_1, x(S) \in B_{\frac{1}{k}}(x_s)\} \text{,}
\end{aligned} 
\end{equation}
where $B_{\frac{1}{k}}(x_s)$ is an open ball of radius $k^{-1}$ around $x_s$ with $k \in  \mathbb{N}$  large, which we consider to get positive hitting probabilities and thus, finite hitting times. Assume that $T$ is chosen to be sufficiently large for the instanton $\psi$ from $\bar x_1 $ to $x_s$ (as introduced after Theorem \eqref{thm:transformed_FWT}) to connect the two balls within $[0, T]$. In the following, we will often suppress the dependence on $T$, $k$, and the initial condition $\bar x_1$. 

We now derive Theorem \ref{thm:chaotic_ldp}, an LDP for the set of paths $\Gamma$ of the chaotic system in the limit $\varepsilon\rightarrow0$. To do so, we require the following result.

\begin{lemma} \label{lem:convergence_gamma}
For $\Gamma$, $x^{(\epsilon)}_{(\delta)}$ and $X_{(\delta)}$ as above
\begin{align}
\mathbb{P}[x^{(\epsilon)}_{(\delta)} \in \Gamma] \rightarrow \mathbb{P}[X_{(\delta)} \in \Gamma]
\end{align}
for $\epsilon \rightarrow 0$ for every fixed $\delta > 0$.
\end{lemma}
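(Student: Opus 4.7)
The plan is to reduce the claim to the Portmanteau characterization of weak convergence of Borel probability measures on the Polish space $C([0,T], \mathbb{R}^n)$. By \Cref{lem:melbourne}, with $f_0$ rescaled by $\sqrt{2\delta}$ so the limit diffusion coefficient is $\sqrt{2\delta}\sqrt{\Sigma}$, the laws of $x^{(\epsilon)}_{(\delta)}$ converge weakly to the law of the limit SDE solution $X_{(\delta)}$ as $\epsilon \to 0$, for each fixed $\delta > 0$. Portmanteau then yields $\mathbb{P}[x^{(\epsilon)}_{(\delta)} \in \Gamma] \to \mathbb{P}[X_{(\delta)} \in \Gamma]$ provided $\Gamma$ is a continuity set of the limit law, i.e.\ $\mathbb{P}[X_{(\delta)} \in \partial \Gamma] = 0$. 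The whole argument thus reduces to verifying this negligibility of the boundary.

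I would decompose $\Gamma = \Gamma_1 \cap \Gamma_2$ with $\Gamma_1 = \{x \in C([0,T], \mathbb{R}^n) : x(0) \in B_{1/k}(\bar x_1)\}$ and $\Gamma_2 = \{x : \min_{t \in [0,T]} \lVert x(t) - x_s \rVert < 1/k\}$. The set $\Gamma_1$ is the preimage of an open ball under the continuous evaluation $x \mapsto x(0)$, and $\Gamma_2$ is the preimage of $(-\infty, 1/k)$ under the sup-norm-continuous functional $x \mapsto \min_t \lVert x(t) - x_s \rVert$; both are open, so $\Gamma$ is open and $\partial \Gamma \subseteq \partial \Gamma_1 \cup \partial \Gamma_2$. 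The boundary $\partial \Gamma_1 \subseteq \{x : x(0) \in \partial B_{1/k}(\bar x_1)\}$ has zero $X_{(\delta)}$-mass because the initial condition $X_{(\delta)}(0) = \bar x_1$ is deterministic and lies strictly inside $B_{1/k}(\bar x_1)$. The remaining task is to show
\begin{equation*}
\mathbb{P}\bigl[\min_{t \in [0,T]} \lVert X_{(\delta)}(t) - x_s \rVert = 1/k\bigr] = 0.
\end{equation*}

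This tangential-touching estimate is the main obstacle. Here I would use non-degeneracy of the limit diffusion coefficient $\sqrt{2\delta}\sqrt{\Sigma}$ (an assumption implicit in the surrounding Kramers'-law setup, as \Cref{thm:irrev_Kramers} requires it). Setting $\tau = \inf\{t \ge 0 : \lVert X_{(\delta)}(t) - x_s \rVert \le 1/k\}$ and $\tau' = \inf\{t \ge 0 : \lVert X_{(\delta)}(t) - x_s \rVert < 1/k\}$, the strong Markov property combined with the fact that a non-degenerate additive-noise diffusion started on the sphere $\partial B_{1/k}(x_s)$ enters the open interior in any positive time with probability one (Brownian increments dominate the bounded Lipschitz drift on short scales) gives $\tau = \tau'$ almost surely on $\{\tau \le T\}$. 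The event $\{\min_t \lVert X_{(\delta)}(t) - x_s \rVert = 1/k\}$ is then contained in the null set $\{\tau \le T < \tau'\}$, so $\Gamma$ is a continuity set of the law of $X_{(\delta)}$ and Portmanteau closes the proof.
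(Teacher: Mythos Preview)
Your argument is correct and is, in fact, more careful than the paper's own proof. The paper proceeds via the continuous mapping theorem applied to the first-hitting-time functional $h(x) = \inf\{S \in [0,T] : x(S) \in B_{1/k}(x_s)\}$, asserting that $h$ is continuous on $C([0,T],\mathbb{R}^n)$ and hence that the laws of $h(x^{(\epsilon)}_{(\delta)})$ converge weakly to the law of $h(X_{(\delta)})$; from this the paper concludes $\mathbb{P}[\bar S^{(\epsilon)}_{(\delta)} < \infty] \to \mathbb{P}[\bar S_{(\delta)} < \infty]$.

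The two routes are closely related but yours is the rigorous one. The hitting-time functional $h$ is \emph{not} globally continuous: it is only upper semicontinuous, and its discontinuity set is exactly the set of paths that touch $\partial B_{1/k}(x_s)$ without entering the open ball --- the very tangential-touching event you single out. So what the paper's proof really needs (to apply the a.e.-continuous mapping theorem) is $\mathbb{P}[X_{(\delta)} \in \operatorname{Disc}(h)] = 0$, which is the same null-set estimate you establish via the strong Markov property and non-degeneracy of the diffusion. In short, the paper's use of the continuous mapping theorem and your use of Portmanteau are two faces of the same argument; your version makes explicit the measure-zero boundary step that the paper leaves implicit, and your non-degeneracy hypothesis is exactly the ingredient needed to justify the paper's continuity claim in the almost-sure sense.
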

\begin{proof}
First, notice that both processes have the same initial condition, i.e. 
\begin{align*}
\mathbb{P}[x^{(\epsilon)}_{(\delta)}(0) = \bar x_1] = 1 =  \mathbb{P}[X_{(\delta)}(0) = \bar x_1]
\end{align*}  
by definition. 
Define the stopping times 
\begin{align*}
\bar S_{(\delta)} = \inf \{S \in [0, T] : X_{(\delta)}(S) \in B_{\frac{1}{k}}(x_s)\}
\end{align*} 
and
\begin{align*}
\bar S_{(\delta)}^{(\epsilon)} = \inf \{S \in [0, T] : x^{(\epsilon)}_{(\delta)}(S) \in B_{\frac{1}{k}}(x_s)\}
\end{align*} with the convention $\inf(\emptyset) = \infty$.

Lemma \ref{lem:melbourne}, which relies on the WIP \ref{def:WIP}, showed the convergence of the distributions $x^{(\epsilon)}_{(\delta)} \rightarrow_w X_{(\delta)}$.
The functional $h:C([0, T], \mathbb{R}^n) \rightarrow [0, \infty], h(x) = \inf \{S \in [0, T] : x(S) \in B_{\frac{1}{k}}(x_s)\}$ is almost surely continuous for the limit process $X_{(\delta)}$, i.e., the set of discontinuities $D_h$ of $h(X_{(\delta)})$ has measure zero: $\mathbb P[X_{(\delta)} \in D_h]=0$. This is the case as the functional $h$ is discontinuous exactly at paths that: (a) stay arbitrarily close to $\partial B_{\frac{1}{k}}(x_s)$ without entering $B_{\frac{1}{k}}(x_s)$, or (b) hit the boundary $\partial B_{\frac{1}{k}}(x_s)$ at a first time $\tau$ but fail to enter the interior within an arbitrarily short interval afterward.
Due to the positivity and smoothness of transition densities and the strong Markov property of $X_{(\delta)}$, we know that both (a) and (b) have probability zero \cite{Billingsley1999}.

Thus, we can apply the continuous mapping theorem \cite{vanderVaart1998} to ensure that the distributions of $\bar S_{(\delta)}^{(\epsilon)}$ converge to those of $\bar S_{(\delta)}$.
Then, given that the initial conditions are concentrated in $\bar x_1$,
\begin{align*}
\mathbb P[x^{(\epsilon)}_{(\delta)} \in \Gamma] = \mathbb P[\bar S_{(\delta)}^{(\epsilon)} < \infty] \rightarrow \mathbb P[\bar S_{(\delta)} < \infty] = \mathbb P[X_{(\delta)} \in \Gamma]\text{.}
\end{align*}
\end{proof} 

Thus for any $\epsilon_1 > 0$, there is an $\tilde \epsilon=\tilde \epsilon(\epsilon_1, \delta, \Gamma)$ s.t. 
\begin{align} \label{eq:probConvEps}
\lvert \mathbb{P}[X_{(\delta)} \in \Gamma] - \mathbb{P}[x^{(\epsilon)}_{(\delta)} \in \Gamma]\rvert < \epsilon_1
\end{align} 
for all $\epsilon \leq \tilde \epsilon$. By the properties of the Wiener process and the definition of $\Gamma$ via hitting a ball
of non-zero Lebesgue measure around $x_s$, we can ensure $\mathbb{P}[X_{(\delta)} \in \Gamma]>0$.  

Equation (\ref{eq:probConvEps}) implies the two inequalities
\begin{equation}
\begin{aligned}\label{eq:twoIneqalities}
\mathbb{P}[x^{(\epsilon)}_{(\delta)}\in \Gamma] - \epsilon_1 \leq{\mathbb{P}[X_\delta \in \Gamma]} &\leq \mathbb{P}[x^{(\epsilon)}_{(\delta)}\in \Gamma] + \epsilon_1.
\end{aligned}
\end{equation}
In the following, we choose $\epsilon_1$ small, s.t. $\epsilon_1 \ll \frac{1}{2} \mathbb{P}[X_{(\delta)}\in \Gamma]< \mathbb{P}[x^{(\epsilon)}_{(\delta)}\in \Gamma]$. This implies $\chi:=\epsilon_1 \mathbb{P}[x^{(\epsilon)}_{(\delta)}\in \Gamma]^{-1} \ll 1$.
Taking the logarithm of (\ref{eq:twoIneqalities}) and applying the Taylor expansion $\log(y+\epsilon_1)=\log(y)+\sum_{i = 1}^\infty \frac{(-1)^{i+1}}{i}(\frac{\epsilon_1}{y})^i$ leads to the two inequalities
\begin{equation}
\begin{aligned} \label{eq:InequpperlogTaylor}
\log{\mathbb{P}[X_\delta \in \Gamma]} &\leq \log (\mathbb{P}[x^{(\epsilon)}_{(\delta)}\in \Gamma] + \epsilon_1)\\ &= \log (\mathbb{P}[x^{(\epsilon)}_{(\delta)}\in \Gamma])+ \chi +\mathcal{O}(\chi^2)
\end{aligned}
\end{equation}
and 
\begin{equation}
\begin{aligned} \label{eq:IneqlowerlogTaylor}
\log{\mathbb{P}[X_\delta \in \Gamma]} &\geq \log (\mathbb{P}[x^{(\epsilon)}_{(\delta)}\in \Gamma] - \epsilon_1)\\ &= \log (\mathbb{P}[x^{(\epsilon)}_{(\delta)}\in \Gamma])-\chi +\mathcal{O}(\chi^2).
\end{aligned}
\end{equation}
The higher-order terms of the logarithm expansion contribute less than the first-order term and are negative in total: if $y, \epsilon_1>0$ and $y > \epsilon_1$ then $-\frac{\epsilon_1}{y} < \sum_{i = 2}^\infty \frac{(-1)^{i+1}}{i}(\frac{\epsilon_1}{y})^i < 0$, and we thus find
\begin{align}\label{eq:logHigherOrderTerms}
\log(y)-2\frac{\epsilon_1}{y} < \log(y-\epsilon_1) < \log(y+\epsilon_1) < \log(y)+\frac{\epsilon_1}{y}\text{.}
\end{align}

We assume $\Gamma$ to be regular with respect to $I$ as described before Theorem \ref{thm:transformed_FWT}. Using the LDP (\ref{eq:LDP}) and the inequalities \ref{eq:InequpperlogTaylor}, \ref{eq:IneqlowerlogTaylor}, and \ref{eq:logHigherOrderTerms}, we find for all $\epsilon \leq \tilde \epsilon$ 
\begin{equation}
\begin{aligned}
- \inf_{x \in \bar\Gamma}I(x) &\leq
\lim_{\delta \rightarrow 0} 2\delta \log (\mathbb{P}[X_{(\delta)}\in \Gamma]) \\
& \leq \lim_{\delta \rightarrow 0} 2\delta \left(\log (\mathbb{P}[x^{(\epsilon)}_{(\delta)}\in \Gamma])+ \chi \right) \\
&= \lim_{\delta \rightarrow 0} 2\delta (\log (\mathbb{P}[x^{(\epsilon)}_{(\delta)}\in \Gamma]))+ \lim_{\delta \rightarrow 0}2\delta \chi  \\
&=\lim_{\delta \rightarrow 0}2 \delta (\log (\mathbb{P}[x^{(\epsilon)}_{(\delta)}\in \Gamma]))\text{,}
\end{aligned}
\end{equation}
where $\lim_{\delta \rightarrow 0}\delta \chi^n = 0$ for all $n \in \mathbb{N}$, as $0 < \chi < 1$. 
Similarly 
\begin{equation}
\begin{aligned}
- \inf_{x \in \bar\Gamma}I(x) &\geq
\lim_{\delta \rightarrow 0} 2\delta \log (\mathbb{P}[X_{(\delta)}\in \Gamma]) \\
& \geq \lim_{\delta \rightarrow 0} 2\delta \left(\log (\mathbb{P}[x^{(\epsilon)}_{(\delta)}\in \Gamma])- 2\chi \right) \\
& = \lim_{\delta \rightarrow 0} 2\delta (\log (\mathbb{P}[x^{(\epsilon)}_{(\delta)}\in \Gamma])) \text{.}
\end{aligned}
\end{equation}
This proves $\mathbb{P}[x^{(\epsilon)}_{(\delta)}\in \Gamma] \simeq e^{\frac{I(\varphi)}{2\delta}}$ in the limit $\epsilon\rightarrow 0$.\\
If we don't evaluate the limits of the terms containing $\chi$ directly, we can find first-order error bounds on the change of the subexponential prefactor.

\begin{theorem}[Chaotic LDP] \label{thm:chaotic_ldp}
Given that $\epsilon \rightarrow 0$ sufficiently fast compared to $\delta\rightarrow0$, 
\begin{equation}
\begin{aligned}
&\ \ \ \, \,- \inf_{x \in \Gamma}I(x) - \lim_{\delta \rightarrow 0 }2\delta \chi \leq \lim_{\delta \rightarrow 0} 2\delta \log (\mathbb{P}[x^{(\epsilon)}_{(\delta)}\in \Gamma]) \\ 
&\leq
- \inf_{x \in  \Gamma}I(x) + \lim_{\delta \rightarrow 0 }4\delta \chi
\end{aligned}
\end{equation}
\end{theorem}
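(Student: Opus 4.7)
The plan is to complete the argument sketched in the paragraphs immediately preceding the statement, keeping careful track of which one-sided tail of the logarithmic Taylor expansion supplies each of the asymmetric coefficients $2\delta\chi$ and $4\delta\chi$. Starting from \Cref{lem:convergence_gamma} applied at fixed $\delta > 0$, I would choose $\epsilon_1 = \epsilon_1(\delta)$ with $\epsilon_1 \ll \tfrac{1}{2}\mathbb{P}[X_{(\delta)} \in \Gamma]$ and then pick $\epsilon \leq \tilde\epsilon(\epsilon_1,\delta,\Gamma)$ so that \eqref{eq:twoIneqalities} holds and $\chi := \epsilon_1/\mathbb{P}[x^{(\epsilon)}_{(\delta)} \in \Gamma] \ll 1$. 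This simultaneously ensures $\mathbb{P}[x^{(\epsilon)}_{(\delta)} \in \Gamma] > \epsilon_1 > 0$, which is exactly the hypothesis needed for the refined logarithmic bound \eqref{eq:logHigherOrderTerms} applied with $y = \mathbb{P}[x^{(\epsilon)}_{(\delta)} \in \Gamma]$.

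Next, writing $P_c := \mathbb{P}[x^{(\epsilon)}_{(\delta)} \in \Gamma]$ and $P_s := \mathbb{P}[X_{(\delta)} \in \Gamma]$, I combine \eqref{eq:twoIneqalities} with \eqref{eq:logHigherOrderTerms} to obtain
\[
\log P_c - 2\chi \;<\; \log(P_c - \epsilon_1) \;\leq\; \log P_s \;\leq\; \log(P_c + \epsilon_1) \;<\; \log P_c + \chi.
\]
Solving the outer inequalities for $\log P_c$ produces the asymmetric sandwich $\log P_s - \chi < \log P_c < \log P_s + 2\chi$. The asymmetry is inherited directly from \eqref{eq:logHigherOrderTerms}: the first-order term $+\epsilon_1/y = +\chi$ is an exact upper bound for $\log(y+\epsilon_1)$, whereas bounding $\log(y-\epsilon_1)$ from below requires absorbing all higher-order negative corrections in the series, producing the factor $2\chi$. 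Multiplying by $2\delta > 0$ then yields
\[
2\delta \log P_s - 2\delta\chi \;<\; 2\delta \log P_c \;<\; 2\delta \log P_s + 4\delta\chi.
\]

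I would then pass to the limit $\delta \to 0$. Since $\Gamma$ is regular with respect to $I$, the Freidlin--Wentzell LDP in \Cref{thm:transformed_FWT} applied to the SDE limit gives $\lim_{\delta \to 0} 2\delta \log P_s = -\inf_{x \in \Gamma} I(x)$. Under the stated hypothesis that $\epsilon \to 0$ sufficiently fast that the limits of $\delta\chi$ exist, sandwiching the middle term yields exactly the two-sided bound in the statement, with $-\lim_{\delta\to 0} 2\delta\chi$ on the lower side and $+\lim_{\delta\to 0} 4\delta\chi$ on the upper side; the asymmetry of the bound faithfully reflects the asymmetry of the logarithmic expansion identified above.

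The main obstacle is giving quantitative content to the phrase ``$\epsilon \to 0$ sufficiently fast''. The threshold $\tilde\epsilon(\epsilon_1,\delta,\Gamma)$ from \Cref{lem:convergence_gamma} is an existence statement with no explicit rate, while as $\delta \to 0$ the denominator $P_c$ inside $\chi$ shrinks roughly like $\exp(-\inf_\Gamma I/(2\delta))$, so $\epsilon_1(\delta)$ must decay at least exponentially for $\chi$ to stay bounded and for $\delta\chi$ to admit a limit at all. The hypothesis should therefore be read as the joint existence of rates $\epsilon(\delta),\epsilon_1(\delta)\to 0$ with $\epsilon(\delta) \leq \tilde\epsilon(\epsilon_1(\delta),\delta,\Gamma)$ such that $\lim_{\delta \to 0}\delta\chi(\delta)$ exists; actually producing such coupled rates requires a quantitative version of the homogenization convergence in \Cref{lem:convergence_gamma}, which is not available from the excerpt and is the genuine technical content hidden behind the qualitative assumption.
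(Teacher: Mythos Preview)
Your proposal is correct and follows essentially the same route as the paper: it combines the two-sided bound \eqref{eq:twoIneqalities} with the asymmetric logarithmic envelope \eqref{eq:logHigherOrderTerms}, rearranges to sandwich $2\delta\log\mathbb{P}[x^{(\epsilon)}_{(\delta)}\in\Gamma]$, and then invokes the Freidlin--Wentzell LDP for the SDE limit, with the asymmetry in the coefficients $2\delta\chi$ versus $4\delta\chi$ arising exactly as you explain. Your closing discussion of the obstacle---that ``$\epsilon\to 0$ sufficiently fast'' hides an exponential rate requirement on $\epsilon_1(\delta)$ with no quantitative control on $\tilde\epsilon$---mirrors the paper's own caveat immediately following the theorem.
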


As $0 < \chi < 1$, we can derive from Theorem 
\ref{thm:chaotic_ldp}
\begin{equation}
\begin{aligned}
&\ \ \ \, \,- \inf_{x \in \Gamma}I(x) - \lim_{\delta \rightarrow 0 }2\delta \leq \lim_{\delta \rightarrow 0} 2\delta \log (\mathbb{P}[x^{(\epsilon)}_{(\delta)}\in \Gamma])
\\ &
\leq - \inf_{x \in  \Gamma}I(x) + \lim_{\delta \rightarrow 0 }4\delta,
\end{aligned}
\end{equation}
and thus, we can write 
\begin{equation}
\begin{aligned}\label{eq:chaoticLDP}
\mathbb{P}[x^{(\epsilon)}_{(\delta)}\in \Gamma] \simeq  e^{-\frac{1}{2\delta}\inf_{x \in  \Gamma}I(x)} \simeq \mathbb{P}[X_{(\delta)}\in \Gamma].
\end{aligned}
\end{equation}

This derivation is only valid if $\chi < 1$, which can be guaranteed if $\epsilon_1 < \frac{1}{2}P[X_{(\delta)}\in \Gamma] \simeq e^{-\frac{1}{2\delta}\inf_\Gamma I} \rightarrow 0$ as $\delta \rightarrow 0$. Although we do not know the exact relationship between $\epsilon_1$ and $\tilde\epsilon$, this suggests that $\epsilon$ has to approach $0$ at least exponentially fast compared to $\delta$.

The derivation of the exact sub-exponential prefactor to write Equation \eqref{eq:chaoticLDP} not just as logarithmic equivalence, but as equality, is left for future work. However, the terms containing $\chi$ in Theorem \ref{thm:chaotic_ldp} suggest an additional prefactor $\kappa\in [e^{-1}, e^2]$.
\vspace{\baselineskip}

Analogous to the procedure in the SDE case explained after Equation (\ref{eq:exit_time_D}), we can finally derive the chaotic Kramers' law, Lemma \ref{lem:chaotic_kramers}, from Equation (\ref{eq:chaoticLDP}).
In the sketch of the proof of Equation \eqref{eq:exit_time_D}, we defined $p = e^{-\frac{\bar V(\bar x_1, x_s)}{2\delta}}$. Now, let $\tilde p = \kappa p \in [0, 1]$ for sufficiently small $\delta$. In the SDE case, we then used the Markov property to divide the process into intervals of time $T_1$ for deriving bounds for the expected exit time from $D$. By contrast, the process $x^{(\epsilon)}_{(\delta)}$ with fixed $\epsilon$ and $\delta$, is not Markovian, as we have to view it as a solution process of the inhomogeneous random ODE 
\begin{align}
    \dot x^{(\epsilon)}_{(\delta)} = \epsilon^{-1}f_0\left(y^{(\epsilon)}_0+\int_0^t \epsilon^{-2} g(y^{(\epsilon)}(s)) ds\right) + f(x^{(\epsilon)}_{(\delta)})
\end{align} 
converging to the SDE solution. The dependence on $\delta$ will be reflected in $f_0$.

Nevertheless, we conjecture that under moderate assumptions on the chaotic subsystem and large enough time scale separation, an approximate Markov property is satisfied when observing the forced subsystem only after sufficient evolution times of the chaotic forcing. Thus, restarting the process at multiples of $T_1$ to calculate the expected exit time from the geometric distribution is still justified in the chaotic case, similar to the SDE case.

Note that using $\tilde p$ yields a possibly larger upper bound $\frac{T_1}{\kappa}\exp\left(\frac{\bar V(\bar x_1, x_s)}{2\delta}\right)$, but doesn't change the limit argument. Thus, we find Lemma \ref{lem:chaotic_kramers}. (End of proof outline)  

Although the derivation presented here can't give a statement on the subexponential prefactor, it suggests that the prefactor changes within the range of $\kappa^{-1} \in [e^{-2}, e^1]$. Confirming this hypothesis is left for future work. 

Note that for a fixed $\epsilon > 0$, estimating $\mathbb E[\tau_{1, 2}^{(\epsilon)}]$ would require concrete knowledge about the invariant measure on the compact attractor. As $f_0(\Lambda)$ is compact, for fixed $\epsilon$, the perturbation is bounded, while the increments of the Wiener Process are unbounded. This means there might be a $T_0 \in [0, \infty]$ s.t. the probability of tipping in the chaotic system is $0$, for some choices of $\epsilon$ and $\delta$, while it is non-zero in the SDE. 
Thus, the chaotic Kramers' law cannot hold arbitrarily far from the stochastic limit $\varepsilon\to 0$ and will break down at some point. However, in the example we consider in the following section, we find the chaotic Kramers' law to hold surprisingly far from the stochastic limit, i.e., for $\epsilon \leq 4$.

\section{3-box AMOC model and Lorenz-63 model}
\label{sec:3-boxModelAndLorenz}

The AMOC box model we employ for our study was proposed by \cite{Wood_2019} and is widely utilized in current research as it qualitatively captures the AMOC dynamics of more complex models well. Here, we consider two versions of it: the previously existing version with stochastic forcing\cite{Chapman_2024} and one where we exchanged the stochastic for chaotic forcing. The chaotic forcing that we use here is a toy model representing the forcing that acts on the real-world AMOC by many different mechanisms like atmospheric dynamics, other ocean currents, etc. 
The AMOC strength and salinity concentrations are modeled by interactions between 5 different ocean boxes, the \textit{Northern Atlantic Deep Water}box $N$, the \textit{Atlantic Thermocline} box $T$, the \textit{Southern Ocean near-surface water} box $S$, the \textit{Indo-Pacific Thermocline} box $IP$, and the deep-sea southward propagation of the \textit{North Atlantic Deep Water} box $B$. A visualization of the model structure is given in \cite{Wood_2019}.

By assuming conservation of the total salt content and identifying relevant variables for studying tipping mechanisms, the model can be reduced\cite{Alkhayuon_2019}, leading to the (rescaled) 3-box model described by the following equations:
\begin{equation}
\begin{aligned}
\frac{d \tilde S_N}{dt} 
&= C_N \left[ q \times (\tilde S_T - \tilde S_N) + K_N  (\tilde S_T - \tilde S_N) \right. \\
& \left. - 100  (F_N + A_N H)  S_0 \right] =: f_1^+\\
\frac{d \tilde S_T}{dt} 
&= C_T \left[ q \times (\gamma \tilde{S}_S + (1 - \gamma)\tilde{S}_{IP} - \tilde{S}_T) + K_S(\tilde{S}_S - \tilde{S}_T) \right. \\
& \left. + K_N(\tilde{S}_N - \tilde{S}_T) - 100 (F_T + A_T H)  S_0 \right]=: f_2^+
\end{aligned}
\end{equation}

if $q \geq 0$,
and
\begin{equation}
\begin{aligned}
\frac{d \tilde S_N}{dt} &= C_N \left[ -q \times(\tilde S_B - \tilde S_N) + K_N  (\tilde S_T - \tilde S_N) \right. \\
& \left.  - 100  (F_N + A_N H)  S_0 \right] =: f_1^-\\
\frac{d \tilde S_T}{dt} &= C_T \left[ -q \times (\tilde S_N - \tilde S_T) + K_S(\tilde{S}_S - \tilde{S}_T) \right.\\
& \left. + K_N(\tilde{S}_N - \tilde{S}_T) - 100 (F_T + A_T H)  S_0 \right] =: f_2^-
\end{aligned}
\end{equation}
 if $q < 0$,
where
\begin{align*}
q(\tilde S_N) &= \frac{{\lambda \left[ \alpha \left( T_S - T_0 \right) + \frac{\beta}{100} \left( \tilde S_N - \tilde S_S \right) \right]}}{{1 + \lambda  \alpha  \mu}}\text{.} 
\end{align*} and 

\begin{align*}
    \tilde S_{IP}(\tilde S_N, \tilde S_T) &= \frac{1}{V_{IP}}\left[100(C - S_0(V_B+V_N+V_T+V_IP+V_S)) \right. \\
    &- \left. (V_N \tilde S_N + V_T \tilde S_T+ V_S \tilde S_S + V_B \tilde S_B) \right]
\end{align*}

This model is given in the rescaled version where time is given in years and $35+10\tilde S_i$ has unit ppt (parts per thousand)\cite{Alkhayuon_2019}. The model parameters can be found in the Appendix \ref{sec:Appendix}. We included an optional hosing $H$ (releasing freshwater into the Northern Atlantic, which may be interpreted as influx from melting of the Greenland Ice Sheet and Sea Ice, as well as increased precipitation and river runoff). It is known that the model exhibits a hysteresis in the bifurcation parameter $H$, and one can observe rate-dependent tipping behavior upon considering $H$ as a time-dependent forcing \cite{Alkhayuon_2019}. Some sample time series are shown in \Cref{fig:hosing_comp}. 

Defining $f_i = f_i^+ \mathbbm{1}_{q \geq 0} + f_i^- \mathbbm{1}_{q<0}$, we will denote this concisely as two-dimensional homogeneous ODE $\frac{dx}{dt} = f(x)$ with $x=(\tilde S_N, \tilde S_T)$.

The model's parameters can be adjusted to match the behavior of the AMOC in different GCMs\cite{Wood_2019, Chapman_2024_2} (see Appendix \ref{sec:Appendix}). Previous work\cite{Chapman_2024_2} also studied the influence of noise on the AMOC tipping behavior, estimated noise amplitudes for the three-box model, and analyzed the SDE dynamics $dX = f(X)dt+AdW_t$. In this work, we will focus on the new HadGEM3-MM model parameter calibration, yielding a hysteresis in the hosing parameter $H$, and the corresponding HadGEM3-MM noise coefficient matrix
\begin{align}
A_{MM} = \left(
\begin{matrix}
0.1263 & 0 \\
-0.0869 & 0.1088 
\end{matrix}
\right) \text{.}
\end{align}

Notice that we scale the model \cite{Alkhayuon_2019} for numerical reasons, so the scaling of $A_{MM}$ needs to be adopted to match the model's units. The noise matrix \cite{Chapman_2024_2} results in rather small noise compared to data seen in other studies \cite{Boers2021, Ben-Yami2024a, Ben-Yami_2023, VanWesten2024}. As we are only interested in a qualitative analysis here, we scale it by a factor of $10$ to see tipping in acceptable computational time.

The theory of Large Deviations and the classical and chaotic Kramers' law formally require boundedness and Lipschitz continuity, which are not given in this 3-box model. Nevertheless, we can hope to apply these by exploiting the dissipativity in the system. This indicates that the SDE solution exhibits a tight invariant measure and that the relevant dynamics can be restricted to a compact set. Some useful results can be found in \cite{Khasminskii_2012}, chapter 3.

\begin{figure}
\includegraphics[width=\linewidth]{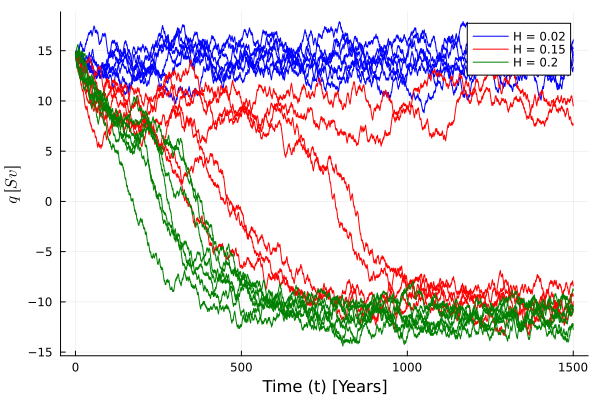}
\caption{Sample timeseries of the chaotically forced 3-box AMOC model with $\epsilon = 1$ and $\delta = 0.5$ for different hosing values. For all three values of $H$, the deterministic unforced system is bistable.}
\label{fig:hosing_comp}
\end{figure}

We are interested in a skew-product ODE system of the form of Equation \eqref{eq:skew_system} that satisfies the assumptions of \Cref{lem:melbourne} s.t. the solution of the slow component converges to the solution of 
\begin{align}\label{eq:SDE}
    dX &= f(X)dt+\sqrt{2\delta}AdW_t.
\end{align} We introduce an additional noise scaling $\sqrt{2\delta}$ in order to apply Kramers Formula later on.

For the chaotic component \eqref{eq:fast_sub}, we choose two independent standard Lorenz-63 systems.
It might be possible to find two observables of a single Lorenz system that, being used in \Cref{lem:melbourne}, yield two independent Wiener Processes. However, this can be achieved more easily by considering a second Lorenz-63 system with a different initial condition. Consider $\dot y^{(\epsilon)} = \epsilon^{-2}g(y^{(\epsilon)})$, such that $y^{(\epsilon)} \in \mathbb{R}^6$.

We define $\pi_i: \mathbb{R}^l \rightarrow \mathbb{R}$ to be the projection onto the i'th coordinate w.r.t. the standard basis. By \Cref{lem:melbourne} with $f \equiv 0$ and $f_0 = \pi_1$, $\epsilon^{-1}\pi_1g(y^{(\epsilon)})$ converges to a scaled Wiener Process $\sigma_L W_t$.

We define 
\begin{align}
f_0 = \frac{\sqrt{2\delta}}{\sigma_L}A
\left(
\begin{matrix}
\pi_1\\
\pi_4
\end{matrix}
\right) 
\end{align}

and claim that the solution of $\dot x^{(\epsilon)} = \epsilon^{-1}f_0(y^{(\epsilon)})+f(x^{(\epsilon)})$ converges to the solution of $dX = f(X) dt + \sqrt{2\delta}AdW_t$ due to the linearity of the WIP \ref{def:WIP} and weak convergence.
This relies on the assumptions that different initial conditions on the Lorenz attractor yield independent noise realizations in the limit and that the coupling results in the anticipated covariance. Looking at the chaotic properties of the Lorenz attractor and the linearity of the WIP and weak convergence, these assumptions seem reasonable. A rigorous proof is left for future work. 

\subsection{Green-Kubo-Formula approximation vs. WIP estimation}
\label{sec:Green-Kubo_vs_WIPestimation}
We aim to calculate $\sigma_L$ in order to be able to simulate the chaotic skew product ODE numerically. We can express $\sigma_L$ explicitly using the Green-Kubo formula

\begin{align}
\frac{1}{2}\sigma_L^2 = \int_0^\infty \lim_{S \rightarrow \infty}  \int_0^S y_1(t)y_1(t+s) ds \text{ } dt \text{,}
\end{align}
where $y_1 = \pi_1 \circ y^{(1)}$. See \cite{Pavliotis_2008}, 11.7.2

Via ergodicity \cite{araujo_singular-hyperbolic_2008, ARAUJO2023373}, this is equivalent to 

\begin{align}\label{eq:GreenKubo}
\frac{1}{2}\sigma_L^2 = \int_0^\infty \int_\Lambda y_1^{(1)}(t)y_1(0) d\mu \text{ } dt \text{.}
\end{align}

The inner integral can also be expressed as $\mathbb{E}_{\mu}[y_1(t)y_1(0)]$

Although the equations above provide a closed-form expression for $\sigma$, they are only of limited use here as both approximating the double time integral as well as sampling from the invariant measure and integrating yield numerically poor convergence behavior.

A different approach to get $\sigma_L$, which seems to be easier to handle numerically, would be to estimate $\sigma_L^2$ using the WIP \eqref{eq:WIP} instead of the Green-Kubo formula.

We begin by solving \eqref{eq:fast_sub} for a long timespan $T$ for $N$ random initial conditions $y^{(i)}(0)$, sampled according to $\mu$. Then we fix a $t > 0$ and calculate $W_k^{(i)} = k^{-\frac{1}{2}}\int_0^{kt}y^{(i)}(\tau) d\tau$ up to $K = \frac{T}{t}$. Using that $k^{-\frac{1}{2}} \int_0^{kt}f_0(y^{(1)}(\tau)) d\tau \rightarrow_w \sigma_L W_t$, we can estimate $\hat \sigma_L^2 = \frac{1}{t} \mathrm{Cov}[W_K^{(i)}]$ for large enough $K$ respectively $T$. A result of this method can be seen in \Cref{fig:lorenz_est}. Based on this, we will use $\sigma_L^2 = 60$ for all simulations.
Note that similar methods to obtain $\sigma_L$ are discussed in \cite{Siegert1998}, where $\sigma_L$ is inferred from given stochastic time-series data via a Kramers-Moyal expansion, see also \cite{Givon2004,VandenEijnden2003}. Their methods infer the diffusion coefficient from a single trajectory, whereas the approach that we used here relies on ensemble simulations, which allows both an easy implementation for practitioners and solves the previously mentioned poor convergence.

\begin{figure}
\includegraphics[width=\linewidth]{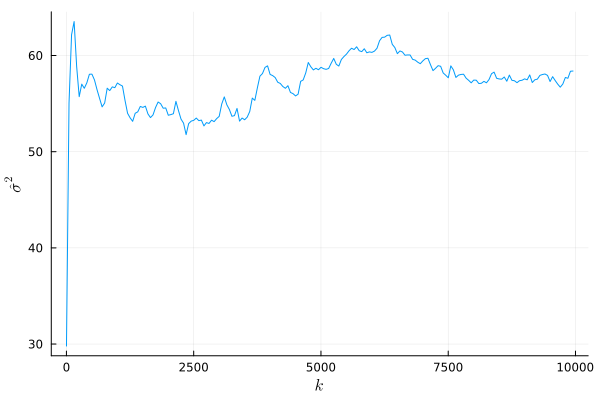}
\caption{Weak invariance principle-based estimates of $\sigma_L^2$, obtained from integrating the first component of the Lorenz-63 system. Estimated over $N = 500$ trajectories with timestep $t=1$ and lengths up to $K = T = 10000$.}
\label{fig:lorenz_est}
\end{figure}

\subsection{Kramers' law plots}
\label{sec:Kramers'plots}

We claim that we can apply \Cref{lem:melbourne} and \Cref{thm:irrev_Kramers} to the chaotically forced AMOC 3-box model and its SDE limit, respectively. In both cases, the model is locally Lipschitz, and due to the dissipativity, the relevant dynamics is restricted to a compact set, so we can apply all the above theorems requiring Lipschitz continuity. The bistable region in the hosing parameter $H$ in the HADGEM3-MM model calibration contains the interval $ \left[ 0.02, 0.2 \right ]$, a bifurcation diagram is shown in \cite{Chapman_2024}, Fig. 3.7.

To visualize the Kramers formula and to test the behavior before the limit $\epsilon \rightarrow 0$, we compute $N$ different trajectories starting in $\bar x_1$ (on-state) for different values of $\delta$ until all $N$ trajectories have tipped to the second equilibrium $\bar x_2$ (off-state) and plot the logarithm of the first time $t_i$ of entering a region sufficiently close to $\bar x_2$. We take the averages of these times for each fixed value of $\delta$ and plot them in \Cref{fig:AMOC_Kramers} panel (a). The slope of these averaged transition times (shown in red in panel (a) ) times a factor $2$ then gives the quasipotential value at the saddle\cite{Berglund_2013} $\bar V(\bar x_1,x_s)$, i.e., the communication height, as $\bar{V}(\bar x_1, \bar x_1) = 0$, and the y-axis intersection corresponds to the coefficient in \eqref{eq:irrev_kramersLogEqCtilde}. In the panels (b) to (d) in \Cref{fig:AMOC_Kramers}, which show the simulated transition times in the ODE case, we display this same red line computed from the SDE simulations for comparison between the SDE and ODE case.
Notice that Kramers' law doesn't require starting the simulations at the equilibrium, but just in its basin of attraction. We initialized the system in the on-state for $H = 0$ although the actual equilibrium is shifted as it depends on the hosing $H$. We assumed that our initial state still remained in the basin of the shifted on-state for hosing $H>0$. Note that the communication height scales quadratically when rescaling the noise matrix $A$.

\begin{figure*}[ht]
    \centering
	\hspace{-0.3\textwidth}
    \begin{subfigure}[t]{0.22\textwidth}
        \centering
        \captionsetup{width=2\linewidth}
        \includegraphics[width=2\linewidth]{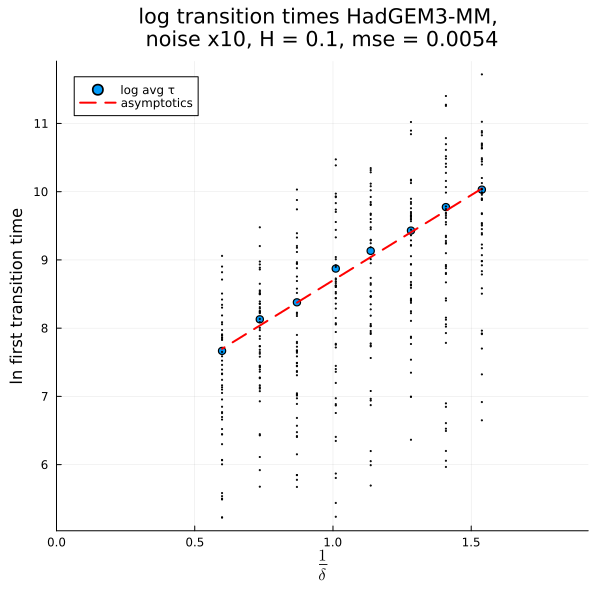}
        \caption{Kramers' law for the SDE \ref{eq:SDE}. This is the well-known case and it can be seen as the limit $\varepsilon\rightarrow0$ of the ODE case. Black dots show individual observations.}
        \label{fig:kramers_noise}
    \end{subfigure}
    \hspace{0.27\textwidth}	
    \begin{subfigure}[t]{0.22\textwidth}
        \centering
        \includegraphics[width=2\linewidth]{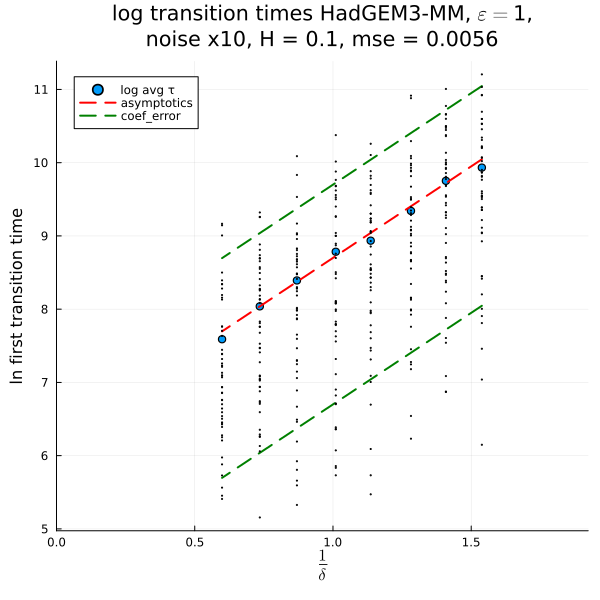}
        \caption{Chaotic ODE with $\epsilon = 1$}
        \label{fig:kramers_e01}
    \end{subfigure}
	
    \vspace{0.3cm}	
	\hspace{-0.3\textwidth}
    \begin{subfigure}[t]{0.22\textwidth}
        \centering
        \includegraphics[width=2\linewidth]{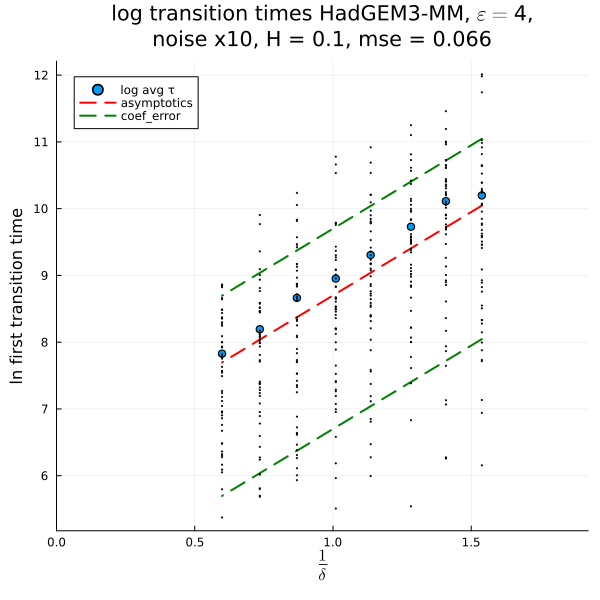}
        \caption{Chaotic ODE with $\epsilon = 4$}
        \label{fig:kramers_e04}
    \end{subfigure} 
    \hspace{0.27\textwidth}
    \begin{subfigure}[t]{0.22\textwidth}
        \centering
        \includegraphics[width=2\linewidth]{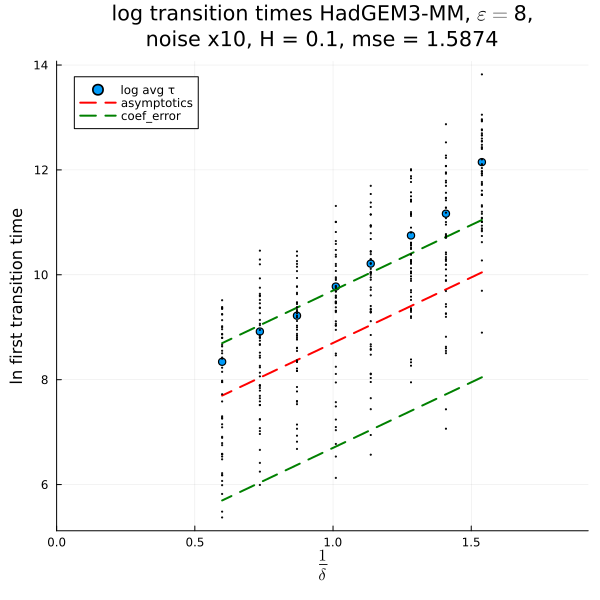}
        \caption{Chaotic ODE with $\epsilon = 8$}
        \label{fig:kramers_e08}
    \end{subfigure}

    \caption{Plots of observed tipping times in the AMOC 3-box model calibrated to match HadGEM3-MM, starting in the on-state for selected values of $\delta$. The black dots show the logarithmic waiting time until the first transition to the off-state, and their average over $50$ samples is shown by the blue dots. The red line is fitted to these averages (fit with $200$ sampled transition times per value of $\delta$) for the SDE case, but the exact same line is still shown in panels (b) to (d) for comparison to the chaotic ODE case. Note that the SDE case can be viewed as the ODE's asymptotic behavior for $\delta \rightarrow 0$. In green, the relative bounds of the error in the prefactor $\kappa^{-1} \in [e^{-2}, e^1]$. We use the noise matrix $A = 10A_{MM}$ from \cite{Chapman_2024} that was obtained via Maximum Likelihood estimation to match the HadGEM3-MM fluctuations.}
    \label{fig:AMOC_Kramers}
\end{figure*}

For small $\epsilon$, in the ODE cases, the slope is quite similar to the SDE case, while for bigger $\epsilon$, the theory of Kramers' law in the chaotic case fails. This aligns with the findings in \Cref{sec:chaotic_kramers}, where we required $\epsilon \ll \delta$. For higher values of $\epsilon$, the slope increases and below a threshold in the noise strength $\delta$, we don't see tipping at all. This behavior is as we would expect, as the chaotic perturbation is bounded by $\frac{\sqrt{2\delta}}{\sigma_L}\epsilon^{-1} \sup_y \lvert \pi_1(y) \rvert$, so at some point, it is not strong enough to be able to kick a trajectory to the other equilibrium. The deviation of the ODE case from the approximated slope in the SDE case grows exponentially in $\epsilon$ as shown in \Cref{fig:mse}. For $\epsilon=4$, the chaotic Kramers' law still agrees reasonably well with the stochastic Kramers' law in the observed range of $\delta$, despite a typical trajectory with $\epsilon=4$ not looking very similar to a noise-driven trajectory as visible in \Cref{fig:chaotic_comp}. 
In \Cref{fig:increments}, we show the distributions of the increments of the SDE and the ODE for several timescale separations $\epsilon$ and the same parameter values as in \Cref{fig:chaotic_comp}. Only for $\epsilon = 0.01$ the histograms agree well with each other, whereas for larger $\epsilon$, the distribution of increments of the ODE is visibly not close to the SDE limit. Still, in our example system, the chaotic Kramers’ law approximately holds for larger values of $\epsilon$, and thus, far beyond the stochastic limit. 
\begin{figure*}[ht]
    \centering
	\hspace{-0.3\textwidth}
    \begin{subfigure}[t]{0.22\textwidth}
        \centering
        \captionsetup{width=2\linewidth}
        \includegraphics[width=2\linewidth]{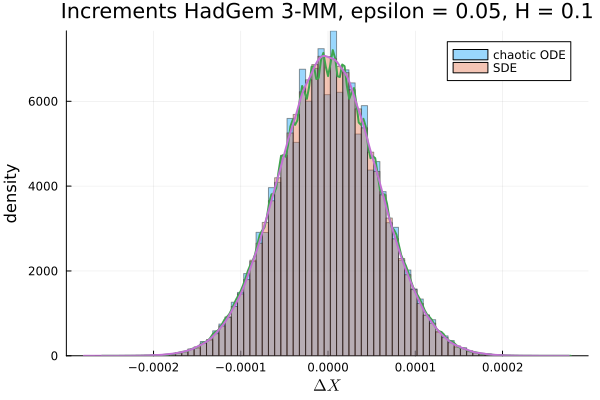}
        \label{fig:increments_e005}
    \end{subfigure}
    \hspace{0.27\textwidth}	
    \begin{subfigure}[t]{0.22\textwidth}
        \centering
        \includegraphics[width=2\linewidth]{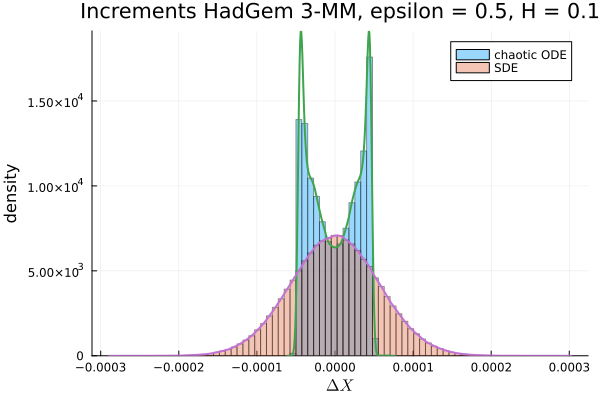}
        \label{fig:increments_e05}
    \end{subfigure}
	
    \vspace{0.3cm}	
	\hspace{-0.3\textwidth}
    \begin{subfigure}[t]{0.22\textwidth}
        \centering
        \includegraphics[width=2\linewidth]{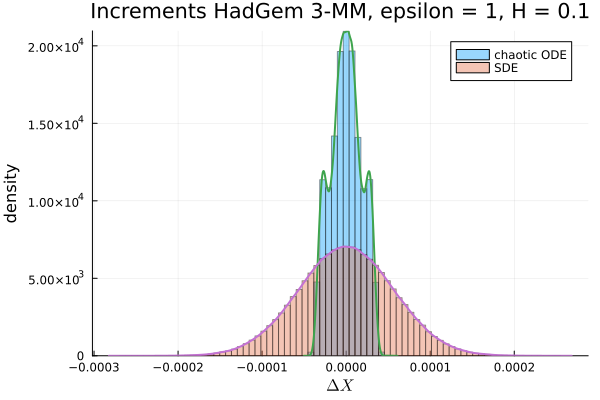}
        \label{fig:increments_e1}
    \end{subfigure}% 
    \hspace{0.27\textwidth}
    \begin{subfigure}[t]{0.22\textwidth}
        \centering
        \includegraphics[width=2\linewidth]{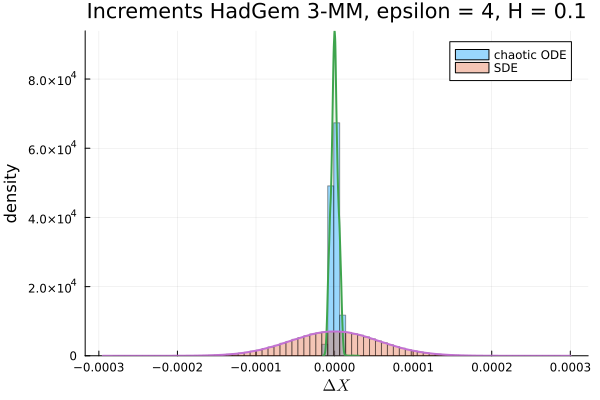}
        \label{fig:increments_e4}
    \end{subfigure}

    \caption{Plots of the increments in the AMOC 3-box model calibrated to match HadGEM3-MM, after reaching the on-state for $\delta = 1$ in comparison to the SDE limit. The increments are defined by $X(t)-X(t-\Delta t)$ with $\Delta t = 0.1$ for $t \in \left[500, 100000\right] \cap 0.1 \mathbb{N}$. 
    }
    \label{fig:increments}
\end{figure*}
Note that for small $\epsilon$, i.e., fast forcing, the response system ``sees'' the running mean of the forcing. The slower the forcing, the longer the response system sees similar forcing values, and thus, extremes along a given forcing trajectory may lead to very early or very late tipping; e.g., if the chaotic forcing spends a long time in just one wing of the Lorenz attractor.

\begin{figure}
\includegraphics[width=\linewidth]{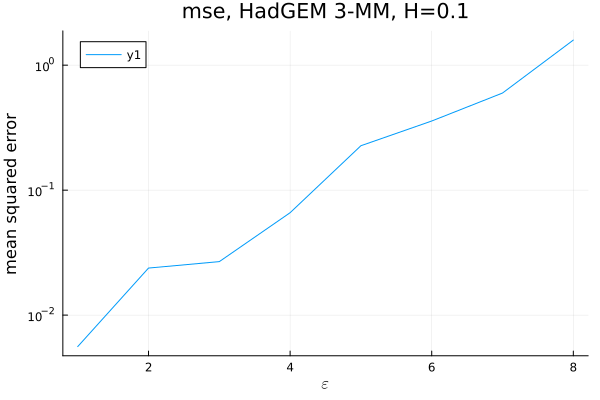}
\caption{Mean squared deviation of the mean transition time in the chaotically forced 3-box AMOC model from Kramers' law for different values of $\epsilon$}
\label{fig:mse}
\end{figure}

\begin{figure}
\includegraphics[width=\linewidth]{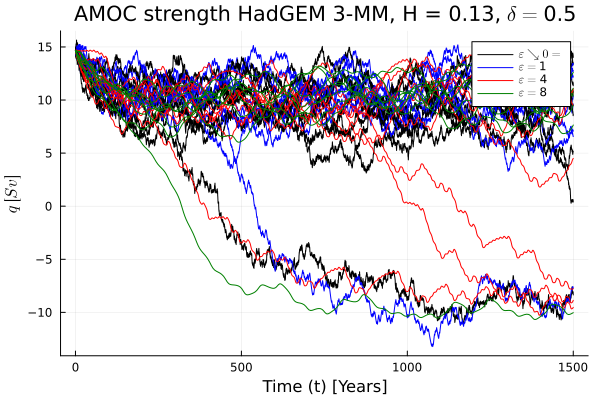}
\caption{AMOC strength of ensembles for each different value of $\epsilon$, indicating the time-scale separation. In all cases we used $\delta = 0.5$ and $A = 10A_{MM}$. The limiting SDE is shown in black.}
\label{fig:chaotic_comp}
\end{figure}

The quasipotential values in Table \ref{tab:quasi_potential_approx} show that noise- or fast chaos-induced tipping from the on- to the off-state or vice versa strongly depends on the value of the hosing parameter $H$.
A higher quasipotential value at the saddle in \Cref{tab:quasi_potential_approx} corresponds to an exponentially longer waiting time for purely noise-induced or fast chaos-induced transitions. As expected, we observe that larger hosing stabilizes the off-state and destabilizes the on-state. The latter even ceases to exist for large $H$. Similar behaviour can also be seen in other models of the AMOC \cite{Stommel_1959, VanWesten2024}.

\begin{table}[ht]
    \centering
\begin{tabular}{r|c|c|c|c}
    $\bar V(\cdot, x_s)$ & $H = 0.02$& $H = 0.05$ & $H = 0.1$ & $H = 0.15$\\
    \hline
    on state & $19$ & $13$ & $5$ & $0.4$\\
    \hline
    off state &  $0.2$& $6$ & $25$ & $50$\\
\end{tabular}
\caption{Quasipotential values of the critical saddle for HADGEM 3-MM fits of the 3-box AMOC model, starting either in the AMOC on- or off state. Values are obtained via fitting a line to the log-transition times as shown in Fig. \ref{fig:AMOC_Kramers}.}
    \label{tab:quasi_potential_approx}
\end{table}

\section{Discussion}
\label{sec:discussion}

We derived the chaotic Kramers' law for the case of a bistable chaotically forced system with the chaotic forcing being fast, but not infinitely fast, compared to the forced system. We described how it can be used to understand tipping times and verified it numerically in a reduced-order model of one of the most important potential climate tipping elements, the AMOC. The intricate numerical details that need to be considered in the case of fast chaotic forcing are discussed, and we explain how they can be handled in a numerically stable way via ensemble estimates. This work highlights both the limitations of Hasselmann's program and its surprisingly wide applicability to certain questions, as evidenced by the observed agreement between the chaotic and stochastic Kramers' scaling up to a timescale separation of $\epsilon=4$ in \Cref{fig:kramers_e08}. This is remarkable as the histograms of the increments of the ODE clearly deviate from the SDE limit for $\epsilon=4$, as shown in \Cref{fig:increments}. 
This discrepancy between the histograms may be further addressed via finite–$\epsilon$ higher-order cumulant corrections to the homogenized SDE, such as the Edgeworth expansions \cite{WoutersGottwaldSIAM2019, WoutersGottwaldRSPA019}. These corrections provide a refined description of the slow variable, yielding modified drift and diffusion terms. Thus, they would also alter the associated large deviation rate function of the effective stochastic model, consistent with the observations in \cite{Bouchet2016}. While these expansions would offer valuable theoretical insight, their practical use for estimating transition rates is limited due to their complexity and the difficulty of incorporating higher-order terms into a full LDP framework. Remarkably, our numerical results indicate that the leading-order Kramers' law obtained from the homogenized SDE already gives accurate predictions for the mean transition time even in regimes where non-Gaussian fluctuations occur. This suggests that the exponential scaling of rare-event probabilities is comparatively insensitive to moderate deviations from the CLT predicted by these higher-order expansions.

A version of Kramers' law for systems forced by multiplicative noise\cite{Rosas2016} in combination with homogenization theory results for multiplicatively forced systems \cite{Gottwald_2013} (briefly mentioned in Appendix \ref{app:MultiplicativeHomogenization}), may allow a generalization to obtain a ``multiplicative chaotic Kramers' law''.
Note that the previous literature on large deviation theory in fast-slow systems, e.g., \cite{Bouchet2016LargeDeviations, Bakhtin2003, Boerner2024} is concerned with stochastic systems showing fast-slow dynamics. This differs from our setting, as in our case, the fast-slow dynamics result in stochastic forcing and are thus no longer present in the stochastic limit. We do not consider fast-slow effects in the limiting stochastic system.
Further, note that our findings do not suggest that there is no big difference between chaotic and stochastic forcing in general. 
For example, if, for fixed forcing speed $\epsilon$, the forcing amplitude gets very small, the range of the bifurcation parameter $H$ for which tipping is possible would shrink to a chaotic tipping window\cite{Ashwin_2024, Roemer2025}, outside which tipping is not possible in response to the bounded chaotic forcing. The beginning of the transition towards this dynamic regime can be seen in the bottom-right panel in Figure~\ref{fig:kramers_e08}, where the transition times become large for cases with small forcing amplitude, and would diverge for small enough $\delta$. By contrast, unbounded noise would almost surely lead to tipping in finite time for any $H$ in the bistable regime of the response system. This shows that wrongly modelling fast chaotic dynamics as noise can lead to unreasonably small average transition times when the forcing amplitude is too small to be close to the stochastic limit.
In Lemma $\ref{lem:chaotic_kramers}$, this issue is taken care of by requiring that $\epsilon\ll\delta$, i.e., the forcing amplitude has to be large enough compared to the forcing speed. As both $\epsilon$ and $\delta$ are small, we have a double-limit problem which is usually complicated\cite{Kuehn2022}. Our derivation of the chaotic Kramers' law suggests that $\epsilon$ has to approach $0$ at least exponentially fast compared to $\delta$ for Kramers' law to hold, but a more detailed analysis of this double-limit is left for future work and may become even more interesting if also the deterministic dynamics show timescale separation\cite{Boerner2024}. Throughout the paper, we scaled the forcing with the square root of the timescale ratio $\epsilon^2$ as in \cite{Melbourne_2011} and assumed the chaotic forcing to be ergodic and bounded. Despite ergodicity and boundedness being approximately fulfilled in many real-world systems, care needs to be taken when it is unclear how close a system is to the stochastic limit. Additionally, we stress that the convergence of the chaotic system to the SDE limit is not to be understood trajectory-wise, but it is a weak convergence of measures\cite{Melbourne_2011}.
Although the chaotic Kramers' law does not reproduce the value of the quasipotential at the saddle for $\epsilon=8$ anymore, we still observe exponential scaling of transition times with the forcing strength in \Cref{fig:AMOC_Kramers} for small enough $\delta^{-1}$. This is an interesting result by itself and invites further analysis.

Note that the chaotic forcing is only a conceptual representation of natural weather variability\cite{Ashwin_2021}, unlikely to represent the actual physical system's drivers realistically, e.g., in terms of weather and especially wind forcing. Moreover, translating the hosing strength to CO$_2$ forcing is not trivial, and recall that we rescaled the noise matrix compared to \cite{Chapman_2024}. For these reasons, only limited quantitative conclusions can be drawn from our work about the real-world AMOC. However, we observe qualitative agreement with other AMOC models of different complexities \cite{Romanou_2023, VanWesten2024}, which highlights the relevance of chaotically forced AMOC models. Our work indicates that, also in purely deterministic GCMs, seemingly stochastic transitions and dynamics may be possible, which encourages the use of both reduced-order and macroscopic stochastic AMOC models, as e.g., in \cite{Chapman_2024,Chapman_2024_2,Alkhayuon_2019, Soons_Grafke_Dijkstra_2025, jacques-dumas_resilience_2024}. 

Starting the chaotically forced box model \ref{sec:3-boxModelAndLorenz} in the AMOC on-state and mimicking possible climate scenarios via the hosing parameter $H$ could involve an initial increase in $H$ with a subsequent decrease \cite{Alkhayuon_2019}. This may result in tipping to the off-state, which subsequently loses stability as a result of decreasing $H$, and consequently tipping back to the on-state in response to the nearing bifurcation, chaotic forcing, or a combination of these. Such simulations are shown in Figure \ref{fig:collapse_recovery} where we consider several different time-dependent hosing scenarios. As previously found\cite{Alkhayuon_2019}, rapidly decreasing the hosing strength after an initial overshoot over the bifurcation threshold in this box model can prevent the AMOC from relaxing to the off-state. However, the ramping speed in Figure \ref{fig:collapse_recovery} does not solely determine whether a trajectory tips or not, which highlights the influence of the chaotic forcing.
\begin{figure}
\includegraphics[width=\linewidth]{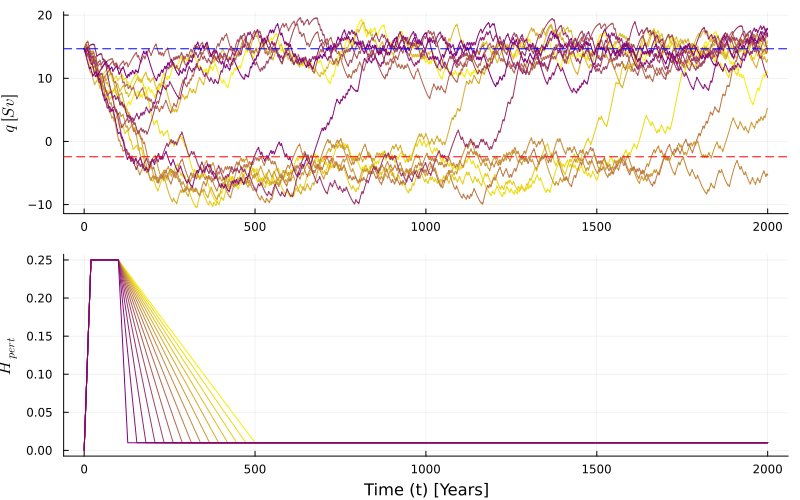}
\caption{AMOC strength (top panel) for 15 runs of the chaotically forced HADGem3-MM 3-box model with different time-dependent hosing shown in the bottom panel. In all cases we used $\epsilon = 2$, $\delta = 1.2$ and $A = 10A_{MM}$}
\label{fig:collapse_recovery}
\end{figure}
Note that the time dependence of the hosing makes the system, in contrast to the other results in this paper, non-autonomous. We still include Figure \ref{fig:collapse_recovery} here to show that a combination of ramped hosing and chaotic forcing can lead to dynamics resembling the observed AMOC collapse and recovery in the NASA-GISS model\cite{Romanou_2023}.
Thus, this setting is a candidate for explaining these NASA-GISS AMOC dynamics, although further analysis is needed to verify or discard it. Understanding the underlying dynamic mechanism of a potential real-world AMOC collapse and recovery solely using comprehensive GCMs in response to anthropogenic greenhouse gas emissions or hosing would be challenging, if not impossible. Using reduced-order models, several mechanisms were explored, including bifurcation-induced tipping \cite{Schefferetal, KuehnCT2, Alkhayuon_2019}, noise-induced tipping \cite{KuehnCT1, Chapman_2024_2}, and tipping induced by non-chaotic non-autonomous forcing \cite{Ashwin_2021, KuehnLongo}. Our work adds the potential explanation of a chaotic forcing-induced collapse to this list, which could potentially also function as a bridge between stochastic noise and non-chaotic non-autonomous forcing scenarios. In the real world, a combination of several of these mechanisms seems likely if the AMOC tips.

\section*{acknowledgments}
The authors wish to thank Ruth Chapman for helpful discussions on the choice of model parameters and the corresponding quasipotential value at the saddle; and Richard Wood for insightful discussions and comments about the 3-box-model.  
R.R., N.B., and C.K. gratefully acknowledge funding from the European Union’s Horizon 2020 research and innovation programme under the Marie Sklodowska-Curie Grant Agreement No. 956170 (CriticalEarth). NB acknowledges funding by the Volkswagen foundation. This is ClimTip contribution \#X; the ClimTip project has received funding from the European Union's Horizon Europe research and innovation programme under grant agreement No. 101137601.  

\section*{author declarations} 
The authors declare that they have no competing interests or conflicts of interest.
The individual authors contributed the following:
J.D.: literature review; significant ideas, especially for the proof of the chaotic Kramers' law; all simulations; all plots; most of the writing of Sections 2,3.
R.R.: literature review; idea to look at a model forced with fast chaos and at Kramers' law in the chaotic case; significant ideas in Sections 2,3, and contributions to writing them; writing of the other sections.
N.B.: Idea to look at tipping in the AMOC-box-model to better understand NASA-GISS simulations; ideas on the physical interpretation.
C.K.: main ideas on how to prove the chaotic Kramers' law; contributions to the mathematics in Section 2

\section*{data availability statement}
The codes to run the model and generate the figures are available at: \href{https://github.com/raphael-roemer/chaotic_Kramers_law-AMOC_tipping}{https://github.com/raphael-roemer/chaotic\_Kramers\_law-AMOC\_tipping}

\appendix
\renewcommand{\theequation}{\Alph{section}\arabic{equation}}

\section{Homogenization for Multiplicative Forcing}
\label{app:MultiplicativeHomogenization}
\setcounter{equation}{0}

A very similar statement to \Cref{lem:melbourne} is true for the case of multiplicative (i.e., $x$-dependent) noise. The slow deterministic subsystem is then of the form
\begin{align}
\dot{x}^{(\epsilon)} = \epsilon^{-1}h(x^{(\epsilon)})f_0(y^{(\epsilon)}) + f(x^{(\epsilon)})
\end{align} and the limit SDE will have a diffusion coefficient of the form $\sqrt{\Sigma} h(X)$. However, the multidimensional version of this result requires restrictive assumptions on $h$. The statement was proven in \cite{Gottwald_2013}, where one also finds a discrete version of \Cref{lem:melbourne}

\section{Deriving Kramers' law from large deviations theory}
\label{app:LargeDevToKramers}
\setcounter{equation}{0}
\subsection{Large Deviation Principle (LDP)}

The following result on large deviations in dynamical systems is attributed to \cite{Freidlin_wentzell_2012}; we refer to the more convenient formulation in \cite{Dembo_2009}, proceeding Theorem 5.6.7. An introduction to the mathematical links between minimum action, quasi-potentials, and large deviations can be found in \cite{Schapers_2019}.

Let $\Gamma \subset \mathcal{H}_1$ be a regular set of paths with respect to $I$, i.e., the minimization of $I$ over the interior $\Gamma^\circ$ equals the minimization over the closure $\bar \Gamma$, so $\inf_{\varphi \in \Gamma^\circ}I(\varphi) = \inf_{x \in \bar \Gamma}I(x)$.
\begin{theorem}[Anisotropic Freidlin-Wentzell Theorem] \label{thm:transformed_FWT}
Consider an SDE with additive noise, i.e. $dX_{(\delta)} = F(X_{(\delta)})dt + \sqrt{2\delta}AdW$ with $A \in \mathbb R^{n \times n} $ non-degenerate, $F$ Lipschitz and bounded. Then the distributions of paths $X_{(\delta)}\in \mathcal{H}_1$, i.e., the family of probability measures $\mathbb P [X_{(\delta)} \in B]$ for measurable sets of paths $B\subset\mathcal{H}_1$, satisfies a large deviation principle (LDP) \cite{dembo_large_2010}:
\begin{align}
\label{eq:LDP}
- \inf_{x \in \Gamma}I(x) = \lim_{\delta \rightarrow 0} \delta \log \mathbb P [X_{(\delta)} \in \Gamma] 
\end{align}
with the Freidlin-Wentzell action \eqref{eq:FW-action} as the good rate function.
\end{theorem}
We will also write $\mathbb P[X_{(\delta)} \in \Gamma]\simeq e^{\frac{-\inf I(\Gamma)}{\delta}}$. This means that in the limit $\delta\rightarrow0$, paths minimising the good rate function $I$ become exponentially more likely than other paths. The continuous minimising path from $\bar x_1$ to some $z \in D$ is called the {\em instanton} $\psi$ from $\bar x_1$ to $z$. For any point $z\in D$ it holds that $I(\psi)= \bar V(\bar x_1, z)$ by definition of $\bar V$.

Note that the quasipotential with respect to the good rate function (\ref{eq:FW-action}) is Lipschitz continuous in the second variable, which follows from Lemma 2.3 in \cite{Freidlin_wentzell_2012}.

\subsection{From LDP to Kramers' law} \label{sec:ldp_to_kramers}
Next, we indicate how to derive the irreversible Kramers formula for $\dot x = F(x) dt + \sqrt{2\delta}A dW_t$ using a large deviation principle. This is the basis for subsequently deriving the chaotic Kramers' law. 

Assume a deterministic bistable ODE $\dot x = F(x)$ with a unique critical saddle $x_s$ (a saddle with a single unstable eigendirection) on the basin boundary between two asymptotically stable equilibrial $\bar x_1$ and $\bar x_2$ and let $D$ be in the basin of attraction of $\bar x_1$. By Theorem 2.1 in \cite{Freidlin_wentzell_2012}, the location of exit of the SDE $d x = F(x) dt + \sqrt{2\delta}dW_t$ from $D$ concentrates at a single point for $\delta \rightarrow 0$ if the quasipotential attains its minimum over $\bar D$ at this point. Given the right choice of $D$, this point will be arbitrarily close to the critical saddle. We hypothesize that this holds true in the anisotropic case as well. 

We can use that the first exit time of an SDE solution satisfies a large deviation principle to deduce that for the right choice of $D$
\begin{equation}
\begin{aligned}  \label{eq:exit_time_D}
\lim_{\delta \rightarrow 0}2\delta \log \mathbb{E}_{x_0}[\tau_{1, 2}] = \lim_{\delta \rightarrow 0}2\delta \log \mathbb{E}_{\bar x_1}[\tau_{D}] \\= \inf_{z\in \partial D}\bar V(\bar x_1, z) = \bar V(\bar x_1, x_s)\text{.} 
\end{aligned}
\end{equation}
We briefly summarise the sketch of the proof\cite{Berglund_2013} of Equation (\ref{eq:exit_time_D}). The authors\cite{Berglund_2013} argue that the times for reaching a small neighbourhood of $\bar x_1$ starting from some $x_0 \in D$ and for reaching a small neighborhood of $\bar x_2$ after leaving $D$ at $x_s$ are negligible, thus the first equality in Equation (\ref{eq:exit_time_D}). Using the LDP (\ref{eq:LDP}) for a set of paths $\Gamma$ from $\bar x_1$ to $x_s$ gives $\mathbb{P}_{\bar x_1}[X_{(\delta)} \in \Gamma] \simeq e^{-\frac{I(\psi)}{2 \delta}} = e^{-\frac{\bar V(\bar x_1, x_s)}{2\delta}}:=p$. 
A solution $X_{(\delta)}$ can be divided into multiple shorter trajectories of some duration $T_1 > 0$, and using the \textit{Markov property} of $X_{(\delta)}$, each of these shorter trajectories can then be regarded as ``starting a new trial'' to exit $D$. The number of attempts follows a geometric distribution with expected value $\frac{1}{p} = e^{\frac{\bar V(\bar x_1, x_s)}{2\delta}}$, and using that the location of exit concentrates at $x_s$ for $\delta$ small enough\cite{Freidlin_wentzell_2012, Boerner2024}, we can find an upper bound of the probability of leaving $D$ in some time $T_1 > 0$: 
\begin{align*}
E_{\bar x_1}[\tau_{D}] \lesssim T_1 e^{\frac{\bar V(\bar x_1, x_s)}{2\delta}} \text{.}
\end{align*} 
Using a more refined reasoning (Theorem 5.7.11 (a) in \cite{Dembo_2009}), one can prove a lower bound of the form 
\begin{align*}
E_{\bar x_1}[\tau_{D}] \gtrsim c e^{\frac{\bar V(\bar x_1 x_s)}{2\delta}}
\end{align*} for some $c < 0$. By $\lesssim$ and $\gtrsim$ we mean inequality up to a small perturbation of $\bar V$ and for small enough $\delta$.

The approximation errors $c$ and $T_1$ vanish in the logarithmic limit \eqref{eq:exit_time_D}, and we find Kramers' law \eqref{eq:irrev_kramersLogEq}. Note that the prefactor is not derived. Without this prefactor, Kramers' law is also known as Arrhenius' Law.

\section{Parameters}
\label{sec:Appendix}
\setcounter{equation}{0}

The parameters\cite{Alkhayuon_2019, Wood_2019, Chapman_2024_2} used for the AMOC box model in Section \ref{sec:3-boxModelAndLorenz} are given in Table \ref{tab:parameters} 

\begin{table}[h!]
\centering
\begin{tabular}{|l|l|l|}
\hline
\textbf{parameter} & \textbf{value} & \textbf{unit}\\
\hline
$V_{N}$ & $4.192 \cdot 10^{16}$ & $m^3$\\
$V_{T}$ & $4.191 \cdot 10^{16}$ & $m^3$\\
$V_{S}$ & $13.26 \cdot 10^{16}$ & $m^3$\\
$V_{IP}$ & $16.95 \cdot 10^{16}$ & $m^3$\\
$V_{B}$ & $96.76 \cdot 10^{16}$ & $m^3$\\
$A_{N}$ & $0.9841 \cdot 10^{6}$ & \\
$A_{T}$ & $-0.1853 \cdot 10^{6}$ & \\
$F_{N}$ & $0.2799 \cdot 10^{6}$ & $Sv$\\
$F_{T}$ & $-0.7920 \cdot 10^{6}$ & $Sv$\\
$T_{S}$ & $5.349 $ & $ ^\circ C$\\
$T_{0}$ & $4.514 $ & $ ^\circ C$\\
$\mu$ & $0.0 \cdot 10^{-8}$ &  $ ^\circ C \text{ } m^{-3}s$\\
$\lambda$ & $2.328 \cdot 10^{7}$ & $m^6 \text{ } kg^{-1} \text{ } s^{-1}$\\
$K_{N}$ & $4.73 \cdot 10^{6}$ & $Sv$ \\
$K_{S}$ & $7.68 \cdot 10^{6}$ & $Sv$\\
$\gamma$ & $0.58$ &  \\
$\alpha$ & $0.12$ & $kg \text{ } m^{-3} \text{ } (^\circ C)^{-1}$\\ %kg m^-3 C^-1
$\beta$ & $790.0 $ &  $kg \text{ } m^{-3}$ \\% kg m^-3 
$Y$ & $3.15 \cdot 10^{7}$ & \\ % sec/year
$S_0$ & $0.035$ & ppt $\cdot 10^{3}$\\
$S_N^{(eq)}$ &  $0.034912$ & ppt $\cdot 10^{3}$\\
$S_T^{(eq)}$ & $0.035435$ & ppt $\cdot 10^{3}$\\
$S_S^{(eq)}$ & $0.034427$ & ppt $\cdot 10^{3}$\\
$S_{IP}^{(eq)}$ & $0.034668$ & ppt $\cdot 10^{3}$\\
$S_B^{(eq)}$ & $0.034538$ & ppt $\cdot 10^{3}$\\

% Other parameter
$\sigma_L$ & $\sqrt{60}$ & \\
\hline
\end{tabular}
\caption{Parameters used in the HadGEM 3 calibration of the 3-box AMOC model \cite{Alkhayuon_2019, Wood_2019, Chapman_2024_2} and the chaotic limit. The units ppt (parts per thousand) and psu (practical salinity unit) are used interchangeably in this context. $Y$ represents the number of seconds per year.}
\label{tab:parameters}
\end{table}
For $i \in \{N, S\}$ define $C_i = \frac{Y}{V_i}$. Note that $Sv$ (Sverdrup) is defined as $1Sv =10^6m^3\text{ }s^{-1}$.
The total salt content $C$ can be computed as 
\begin{align*}
    C = V_N  S_N^{(eq)} + V_T  S_T^{(eq)} + V_S  S_S^{(eq)} + V_IP  {S_{IP}}^{(eq)} + V_B  S_B^{(eq)} \text{.}
\end{align*}
Also $\tilde S_B = 100(S_B^{(eq)}-S_0)$ and $\tilde S_S = 100(S_S^{(eq)}-S_0)$.

The entries of the diffusion coefficient matrix \par
$A_{MM} = \begin{pmatrix} 0.1263 & 0 \\ -0.0869 & 0.1088 \end{pmatrix}$
have unit \cite{Chapman_2024_2} $\text{ppt (year)}^{-\frac{1}{2}} \cdot 10^{-2}$. 
So in order to match the unit $\text{ppt} \cdot 10^1$ of the model variables $\tilde S_N$ and $\tilde S_T$, we find the rescaled noise coefficient matrix to be $\tilde A_{MM} = A_{MM} \cdot 10^{-3}$. For the numerical simulations, $10 \tilde A_{MM}$ was used in order to reduce integration time for observed tipping. The values of the quasipotential at the saddle without this additional scaling can be easily calculated as the quasipotential scales quadratically upon rescaling the diffusion coefficient matrix. 

For the Lorenz 63 system, we used the classical parameter setting $\tilde{\rho} = 28$, $\tilde{\sigma} = 10$, and $\tilde{\beta} = \frac{8}{3}$.

%\nocite{*}
\bibliography{literature}% Produces the bibliography via BibTeX.

\end{document}